\newcounter{braid}
\newcounter{strands}
\def\cross{%
  \@ifnextchar^{\message{Got sup}\cross@sup}{\cross@sub}}
\def\cross@sup^#1_#2{\render@cross{#2}{#1}}
\def\cross@sub_#1{\@ifnextchar^{\cross@@sub{#1}}{\render@cross{#1}{1}}}
\def\cross@@sub#1^#2{\render@cross{#1}{#2}}
\def\render@cross#1#2{
  \def\strand{#1}
  \def\crossing{#2}
  \pgfmathsetmacro{\cross@y}{-\value{braid}*\braid@h}
  \pgfmathtruncatemacro{\nextstrand}{#1+1}
  \foreach \thread in {1,...,\value{strands}}
  {
    \pgfmathsetmacro{\strand@x}{\thread * \braid@w}
    \ifnum\thread=\strand
    \pgfmathsetmacro{\over@x}{\strand * \braid@w + .5*(1 - \crossing) * \braid@w}
    \pgfmathsetmacro{\under@x}{\strand * \braid@w + .5*(1 + \crossing) * \braid@w}
    \draw[braid] \pgfkeysvalueof{/tikz/braid start} +(\under@x pt,\cross@y pt) to[out=-90,in=90] +(\over@x pt,\cross@y pt -\braid@h);
    \draw[braid] \pgfkeysvalueof{/tikz/braid start} +(\over@x pt,\cross@y pt) to[out=-90,in=90] +(\under@x pt,\cross@y pt -\braid@h);
    \else
    \ifnum\thread=\nextstrand
    \else
     \draw[braid] \pgfkeysvalueof{/tikz/braid start} ++(\strand@x pt,\cross@y pt) -- ++(0,-\braid@h);
    \fi
   \fi
  }
  \stepcounter{braid}
}
\tikzset{braid/.style={double=\pgfkeysvalueof{/tikz/braid colour},double distance=1pt,line width=2pt,white}}
\newcommand{\braid}[2][]{%
  \begingroup
  \pgfkeys{/tikz/strands=2}
  \tikzset{#1}
  \pgfkeysgetvalue{/tikz/braid width}{\braid@w}
  \pgfkeysgetvalue{/tikz/braid height}{\braid@h}
  \setcounter{braid}{0}
  \let\sigma=\cross
  #2
  \endgroup
}
\newtheorem{theorem}{Theorem}[section]
\newtheorem{proposition}[theorem]{Proposition}%
\newtheorem{definition}{Definition}%
\numberwithin{definition}{subsection}
\numberwithin{theorem}{subsection}
\numberwithin{figure}{subsection} 
\title{\vspace{-1cm}Universal Braiding Quantum Gates}
\author{David Lovitz\footnote{Lovitz@pdx.edu}}
\providecommand{\keywords}[1]
{
  \small	
  \textbf{\textit{Keywords---}} #1
}
\begin{document}

\makeatletter 
\def\@acknow{}%
\long\def\EarlyAcknow#1 \par{%
\def\@acknow{\abstractfont\abstracthead*{Acknowledgments}
#1\par}}%

\def\printabstract{\ifx\@acknow\empty\else\@acknow\fi\par%
    \ifx\@abstract\empty\else\@abstract\fi\par}
\makeatother



\newcommand\blfootnote[1]{%
  \begingroup
  \renewcommand\thefootnote{}\footnote{#1}%
  \addtocounter{footnote}{-1}%
  \endgroup
}

\maketitle

\begin{abstract}
The Yang-Baxter equation and it's various forms have applications in many fields, including statistical mechanics, knot theory, and quantum information. Unitary solutions of the braided Yang-Baxter equation are of particular interest as quantum gates for topological quantum computers. We demonstrate a simple construction for solutions in any dimension, which are both unitary and universal for quantum computation. We also fully classify a family of solutions to certain generalized Yang-Baxter equations and prove that certain instances of the equation only have solutions that are scalar multiples of the identity.
\end{abstract}

\keywords{
Yang–Baxter equation, quantum computing, braid group, topology}

\blfootnote{The author would like to thank Dr. Steven Bleiler for his valuable feedback on this manuscript.}

\section{The Yang-Baxter Equation and Quantum Gates}
A topological quantum computer is a theoretical machine that manipulates topological phases of matter to perform computation. Topological phases of matter have been observed experimentally as the fractional quantum Hall effect \cite{avron_topological_2003}. In particular, quasi-particles called anyons, which arise when electrons are exposed to a magnetic field, have been observed to exhibit a nontrivial phase change when exchanged. The phase change depends only on the number of particles exchanged and not on the specific path taken \cite{nakamura_direct_2020}. Therefore, the changes in phase of a quantum system consisting of $n$ ordered anyons corresponds to the $n$ strand braid group in topology, denoted by $B_n$. What follows is a brief introduction to this group, from which the Yang-Baxter equation naturally arises from the group's representation theory. The Yang-Baxter equation is a matrix equation, and it's solutions, when unitary, correspond to the quantum gates in a topological quantum computer \cite{kauffman_braiding_2004}. 

\subsection{The Braid Group}
The \textit{braid group} $B_n$ was first introduced by E. Artin in 1925 \cite{artin_theorie_1925}. The group $B_n$ can be visualized as $n$ vertical strands which are braided over and under one another, while maintaining that each vertical strand must pass the horizontal line test. That is, each strand must strictly travel from top to bottom without any local maxima or minima. Braids are considered equivalent if they are ambient isotopic with fixed endpoints. Some ambient isotopies give rise to the relators. Multiplication between two braids is defined by placing the braids vertically above each other, and gluing them together as indicated in Figure \ref{braidmult}.

\begin{figure}[!ht]
    \centering
    \begin{tikzpicture}
    \braid[braid colour=black,strands=3,braid start={(0, 0)}]%
    {\sigma_1}
    
    \braid[braid colour=black,strands=3,braid start={(0, 1.3)}]%
    {\sigma_2}
   
    \node[font=\Huge] at (4.5,0) {\(=\)};
     \braid[braid colour=black,strands=3,braid start={(5,0)}]%
    {\sigma_1}
    \braid[braid colour=black,strands=3,braid start={(5,1)}]%
    {\sigma_2}
    \end{tikzpicture}
    \caption{The multiplication of two 3 strand braids.}
    \label{braidmult}
\end{figure}
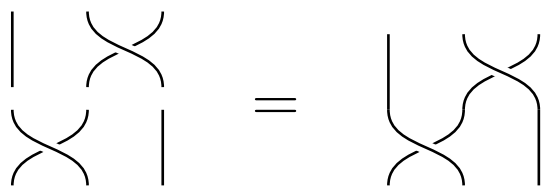

A set of generators for the braid group consists of the set $\{I, \sigma_1, \ldots \sigma_{n-1} \}$ where $\sigma_i$ is the braid with strand $i$ crossing over and to the right of strand $i+1$, and $I$ is the braid with no crossings. There are two relations in the braid group which we will refer to as \textit{braid relations}. The first relation requires that two generators commute as long as they are at least two strands apart: $\sigma_i \sigma_{j}= \sigma_j \sigma_i$ whenever $|i-j|>1$, these relations are sometimes referred to as \textit{far commutativity}. The generators must also satisfy the Yang-Baxter relations: 

\begin{align*} 
\sigma_i \sigma_{i+1} \sigma_i &= \sigma_{i+1} \sigma_i \sigma_{i+1}\\ 
\end{align*}

This relation can be seen in Figure \ref{braidRelation} which shows two braids which are ambient isotopic with fixed endpoints.

\begin{figure}[!ht]
    \centering
\begin{tikzpicture}
\braid[braid colour=black,strands=3,braid start={(0,0)}]%
{\sigma_1 \sigma_2 \sigma_1^{-1}}
\node[font=\Huge] at (4.5,-1.5) {\(=\)};
\braid[strands=3,braid start={(5,0)}]
{\sigma_2 \sigma_1 \sigma_2}
\end{tikzpicture}
\caption{The braid on the right is obtained by ambient isotoping the strands of the braid on the left.}
\label{braidRelation}
\end{figure}
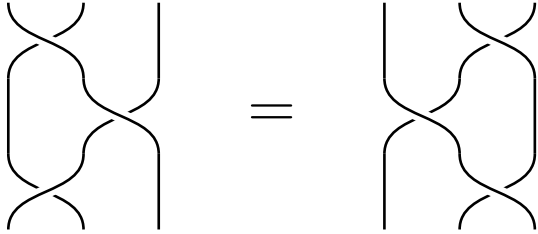

A \textit{representation} of the braid group is a group homomorphism into a group of invertible matrices. If $V$ is an $n$ dimensional vector space over a field $\mathbb{F}$, denote by $V^{\otimes n}$ the tensor product of $V$ with itself $n$ times. It is then natural to look for representations of $B_n$ in $V^{\otimes n}$. One way to define such a braid group representation is by mapping:
\begin{equation}
\label{eq:braidrepBYBE}
\sigma_i \rightarrow I^{\otimes i-1} \otimes R \otimes I^{\otimes n-i-1}
\end{equation}


where $R: V\otimes V \rightarrow V\otimes V$ is an invertible linear map on the tensor product of $V$ with itself. We also let $\otimes$ denote the $aB$ convention Kronecker product between two linear maps. That is the $ij$ block of $A\otimes B$ is given by $(a_{ij} B)$ and $A^{\otimes k}$ denotes the Kronecker product of the matrix $A$ with itself $k$ times. This defines a representation of the braid group as long as all of the braid relations are satisfied.

\subsection{The Braided Yang-Baxter Equation}
\label{sec:bybe}
\begin{definition} Let $V$ be a $d$-dimensional vector space over $\mathbb{C}$. Let $I$ be the $d\times d$ identity matrix on the vector space $V$, and $R: V\otimes V \rightarrow V\otimes V$ an invertible linear transformation. The matrix $R$ satisfies the \textit{braided Yang-Baxter equation} (bYBE) when:
\begin{equation}
\label{braidedYBE}
    (R\otimes I)(I \otimes R)(R\otimes I)=(I\otimes R)(R \otimes I)(I\otimes R)
\end{equation} 
\end{definition}
This is the braided Yang-Baxter equation and any matrix satisfying this equation is sometimes referred to as an R-matrix. The matrix $R$ in the representation defined in equation \ref{eq:braidrepBYBE} must satisfy the matrix form of the braided Yang-Baxter equation defined here, or in references \cite{kassel_quantum_1995} and \cite{birman_braids_1974}.

\subsection{Generalized Yang-Baxter Equation}

\begin{definition} Let $d$, $m$, and $l$ be natural numbers. Let $V$ be a vector space over $\mathbb{C}$ of dimension $d$, and $R:V^{\otimes m} \rightarrow V^{\otimes m}$ be an invertible matrix. Denote the identity on $V$ by $I_V$. The matrix $R$ is a solution to the $(d,m,l)$-\textit{generalized Yang-Baxter equation} whenever

\begin{equation}
\label{gYBE}
    (R\otimes I_V^{\otimes l})(I_V^{\otimes l} \otimes R)(R\otimes I_V^{\otimes l})=(I_V^{\otimes l} \otimes R)(R \otimes I_V^{\otimes l})(I_V^{\otimes l} \otimes R)
\end{equation}
\end{definition}
This gives another way to represent the braid group was introduced by Rowell, Zhang, Wu, and Ge in \cite{rowell_extraspecial_2010} via the so called generalized Yang-Baxter equations (gYBE). When $m=2$ and $l=1$ this expression is equivalent to the bYBE. Any bYBE solution gives rise to a braid group representation while a solution to a gYBE gives rise to a braid group representation whenever the far commutativity relations are also satisfied, which is guaranteed when $l>m/2$ \cite{chen_generalized_2012}. The representation in that case is given by the homomorphism:

\begin{equation}
\sigma_i \rightarrow I_{d^l}^{\otimes i-1} \otimes R \otimes I_{d^l}^{\otimes n-i-1}
\end{equation} 
where $I_{d^l}$ is the $d^l \times d^l$ identity matrix. Finding matrices which satisfy either the bYBE or a gYBE is a difficult task and only a few examples are known, even in low dimensions. The full classification has been completed only for the (2,2,1)-gYBE (equivalent to the 2-dimensional bYBE) in \cite{hietarinta_solving_1993} and the unitary solutions in \cite{dye_unitary_2003}. Unitary solutions are particularly desirable because they appear in topological quantum computation as braiding quantum gates.

\begin{theorem}
All invertible solutions to the $(d,m,l)$-gYBE are of the form $\lambda I_{d^m}$ whenever $l \geq m$.
\end{theorem}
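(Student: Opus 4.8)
The plan is to use the hypothesis $l\ge m$ twice: first to force the two occurrences of $R$ on each side of the equation to act on disjoint tensor slots, and then to isolate $R$ by a partial trace. So I would begin by writing both sides of $(\ref{gYBE})$ as words in the operators $A:=R\otimes I_V^{\otimes l}$ and $B:=I_V^{\otimes l}\otimes R$ on $V^{\otimes(m+l)}$, so that the equation becomes exactly $ABA=BAB$. The operator $A$ acts nontrivially only on the tensor factors $1,\dots,m$, while $B$ acts nontrivially only on the factors $l+1,\dots,l+m$; when $l\ge m$ these index sets are disjoint, and operators acting on disjoint tensor factors commute, so $AB=BA$. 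Consequently $ABA=A^{2}B$ and $BAB=AB^{2}$, whence $A^{2}B=AB^{2}$. Since $R$ is invertible so are $A$ and $B$, and cancelling an $A$ on the left and a $B$ on the right gives $A=B$, i.e.
\[
  R\otimes I_V^{\otimes l}=I_V^{\otimes l}\otimes R .
\]

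Next I would take the partial trace of this identity over the $l$ tensor factors indexed $m+1,\dots,m+l$. On the left those factors all carry the identity, so the partial trace is $d^{\,l}\,R$ acting on the remaining factors $1,\dots,m$. On the right, among the traced-out factors the first $l-m$ carry the identity and the last $m$ carry $R$; in particular every factor on which $R$ acts is among those being traced out (this is the second use of $l\ge m$), so the partial trace is $d^{\,l-m}\operatorname{tr}(R)$ times the identity on factors $1,\dots,m$. Equating, $d^{\,l}R=d^{\,l-m}\operatorname{tr}(R)\,I_{d^m}$, so $R=\lambda I_{d^m}$ with $\lambda=\operatorname{tr}(R)/d^{m}$; invertibility of $R$ forces $\lambda\neq 0$.

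I do not expect a genuine obstacle: the reduction $ABA=BAB\Rightarrow A=B$ is purely formal once $AB=BA$ is established, and the only place that needs care is the tensor-slot bookkeeping in the partial trace — specifically checking that in $I_V^{\otimes l}\otimes R$ the slots carrying $R$ sit entirely inside the block being traced out, which is exactly what $l\ge m$ guarantees. If one wished to avoid partial traces, the identity $R\otimes I_V^{\otimes l}=I_V^{\otimes l}\otimes R$ could instead be compared entrywise to reach the same conclusion, but the trace computation is the shortest route.
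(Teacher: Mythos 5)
Your proof is correct and follows essentially the same route as the paper: both use $l\ge m$ to make $A=R\otimes I_V^{\otimes l}$ and $B=I_V^{\otimes l}\otimes R$ act on disjoint tensor slots and hence commute, reduce $ABA=BAB$ to $A=B$ by cancelling inverses, and arrive at $R\otimes I_V^{\otimes l}=I_V^{\otimes l}\otimes R$. The only place you go beyond the paper is the final implication (that this identity forces $R=\lambda I_{d^m}$), which the paper asserts without argument; your partial-trace computation $d^{\,l}R=d^{\,l-m}\operatorname{tr}(R)\,I_{d^m}$ is a correct and clean way to close that step.
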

\begin{proof}
Fix $(d,m,l)$ with $l\geq m$, note that this makes $R$ a $d^m\times d^m$ matrix, and $I_V$ the $d\times d$ identity matrix. We can then write:
\begin{align*}
    (R\otimes I_d^{\otimes l})(I_d^{\otimes l} \otimes R)(R\otimes I_d^{\otimes l})&=(I_d^{\otimes l} \otimes R)(R \otimes I_d^{\otimes l})(I_d^{\otimes l} \otimes R)\\
    (R\otimes I_d^{\otimes l})(I_d^{\otimes m}\otimes I_d^{\otimes l-m} \otimes R)(R\otimes I_d^{\otimes l})&=(I_d^{\otimes m}\otimes I_d^{\otimes l-m} \otimes R)(R\otimes I_d^{\otimes l})(I_d^{\otimes l} \otimes R)\\
    (R\otimes I_d^{\otimes l})(I_d^{\otimes m}R \otimes (I_d^{\otimes l-m} \otimes R)I_d^{\otimes l})&=(I_d^{\otimes m}R \otimes (I_d^{\otimes l-m} \otimes R)I_d^{\otimes l})(I_d^{\otimes l} \otimes R)\\
    (R\otimes I_d^{\otimes l})(R \otimes I_d^{\otimes l-m} \otimes R)&=(R \otimes I_d^{\otimes l-m} \otimes R)(I_d^{\otimes l} \otimes R)\\
    (R^2 \otimes I_d^{\otimes l} (I_d^{\otimes l-m} \otimes R))&=((R \otimes I_d^{\otimes l-m})I_d^{\otimes l} \otimes R^2)\\
    (R^2 \otimes I_d^{\otimes l-m} \otimes R)&=(R \otimes I_d^{\otimes l-m} \otimes R^2)\\
    (R^{-1}\otimes I_d^{\otimes l})(R^2 \otimes I_d^{\otimes l-m} \otimes R)&=(R^{-1}\otimes I_d^{\otimes l})(R \otimes I_d^{\otimes l-m} \otimes R^2)\\
    (R \otimes I_d^{\otimes l-m} \otimes R)&=(I_d^{\otimes m} \otimes I_d^{\otimes l-m} \otimes R^2)\\
    (R \otimes I_d^{\otimes l-m} \otimes R)(I_d^{\otimes l}\otimes R^{-1})&=(I_d^{\otimes m} \otimes I_d^{\otimes l-m} \otimes R^2)(I_d^{\otimes l}\otimes R^{-1})\\
    (R \otimes I_d^{\otimes l-m} \otimes I_d^{\otimes m})&=(I_d^{\otimes m} \otimes I_d^{\otimes l-m} \otimes R)\\
    (R \otimes I_d^{\otimes l} )&=(I_d^{\otimes l} \otimes R)\\
\end{align*}
It follows that $R$ must be a scalar multiple of the identity $I_{d^m}$ in order to solve the $(d,m,l)$-gYBE whenever $l \geq m$.
\end{proof}


\subsection{The Algebraic Yang-Baxter Equation}
\label{secaYBE}
 
\begin{definition}

Let $P$ be the swap operator interchanging qudits denoted in the Dirac notation \cite{dirac_new_1939} by $P: \ket{ij} \rightarrow \ket{ji}$ and let $I$ be the $d\times d$ identity matrix. When $R$ is a $d^2 \times d^2$ define

\begin{align}
    R_{12} &= (R\otimes I)\\ R_{13} &= (I\otimes P)(R\otimes I)(I\otimes P)\\
    R_{23} &= (I\otimes R)
\end{align}

the \textit{algebraic Yang-Baxter equation} (aYBE) is then defined by:

\begin{equation}
    \label{AlgebraicYBE}
    R_{12}R_{13}R_{23} = R_{23}R_{13}R_{12}
\end{equation}

\end{definition}

The aYBE and bYBE are closely related: if $R$ solves the aYBE then $RP$ solves the bYBE and vice versa. The matrix form of the aYBE can be written as a system of polynomial equations (see \cite{hietarinta_solving_1993} or \cite{dye_unitary_2003} or the appendix):
\begin{align}
    \label{eq:aYBE}
    R_{j_1j_2}^{k_1k_2}R_{k_1j_3}^{l_1k_3}R_{k_2k_3}^{l_2l_3} = R_{j_2j_3}^{k_2k_3}R_{j_1k_3}^{k_1l_3}R_{k_1k_2}^{l_1l_2}
\end{align}where each equation is indexed by $(j_1,j_2,j_3,l_1,l_2,l_3)$, with each index ranging from $1$ to $d$, and following the Einstein summation convention used in differential geometry \cite{einstein_grundlage_1916}, sums are taken over repeated indicies. In this case, the indicies denoted with a $k$ ($k_p$ for $p=1,2,3$) are summed over. This indexing appears with some variation between references because the aYBE is invariant under various index changes, as listed in \cite{hietarinta_solving_1993}. We will use the following convention:
\begin{align}
R(e_i\otimes e_i) &= \sum R_{ij}^{ab} e_a \otimes e_b 
\end{align}
For example when $d=3$ the matrix $R$ looks like:

\begin{align}
\begin{pmatrix}
R_{11}^{11} & R_{12}^{11} & R_{13}^{11} & R_{21}^{11} & R_{22}^{11} & R_{23}^{11} & R_{31}^{11} &
   R_{32}^{11} & R_{33}^{11} \\
 R_{11}^{12} & R_{12}^{12} & R_{13}^{12} & R_{21}^{12} & R_{22}^{12} & R_{23}^{12} & R_{31}^{12} &
   R_{32}^{12} & R_{33}^{12} \\
 R_{11}^{13} & R_{12}^{13} & R_{13}^{13} & R_{21}^{13} & R_{22}^{13} & R_{23}^{13} & R_{31}^{13} &
   R_{32}^{13} & R_{33}^{13} \\
 R_{11}^{21} & R_{12}^{21} & R_{13}^{21} & R_{21}^{21} & R_{22}^{21} & R_{23}^{21} & R_{31}^{21} &
   R_{32}^{21} & R_{33}^{21} \\
 R_{11}^{22} & R_{12}^{22} & R_{13}^{22} & R_{21}^{22} & R_{22}^{22} & R_{23}^{22} & R_{31}^{22} &
   R_{32}^{22} & R_{33}^{22} \\
 R_{11}^{23} & R_{12}^{23} & R_{13}^{23} & R_{21}^{23} & R_{22}^{23} & R_{23}^{23} & R_{31}^{23} &
   R_{32}^{23} & R_{33}^{23} \\
 R_{11}^{31} & R_{12}^{31} & R_{13}^{31} & R_{21}^{31} & R_{22}^{31} & R_{23}^{31} & R_{31}^{31} &
   R_{32}^{31} & R_{33}^{31} \\
 R_{11}^{32} & R_{12}^{32} & R_{13}^{32} & R_{21}^{32} & R_{22}^{32} & R_{23}^{32} & R_{31}^{32} &
   R_{32}^{32} & R_{33}^{32} \\
 R_{11}^{33} & R_{12}^{33} & R_{13}^{33} & R_{21}^{33} & R_{22}^{33} & R_{23}^{33} & R_{31}^{33} &
   R_{32}^{33} & R_{33}^{33} \\
\end{pmatrix}
\end{align}

\subsection{Symmetries of the Yang-Baxter Equations}

Each solution of a gYBE generates more solutions under the following symmetries, to the same gYBE (and with the appropriate choice of $(d,m,l)$ these symmetries also generate more solutions to the bYBE and aYBE).

\begin{proposition}
\label{symmetries}
If $R$ is a solution to the $(d,m,l)$-gYBE then the following are also solutions:
\begin{enumerate}
\item $\lambda R$ for any nonzero scalar $\lambda$
\item $R^{-1}$
\item The complex conjugate $R^*$
\item The transpose $R^T$ and hence the complex adjoint $R^\dagger$
\item $ Q^{\otimes m}R (Q^{-1})^{\otimes m} $ where $Q$ is any complex non-singular $d\times d$ matrix.
\end{enumerate}
\end{proposition}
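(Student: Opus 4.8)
The plan is to reduce the entire proposition to two elementary facts about the Kronecker product: its functoriality, $(A\otimes B)(C\otimes D)=(AC)\otimes(BD)$ whenever the products are defined, and its compatibility with the four operations appearing in items 1--4 (scalar multiplication, inversion, entrywise complex conjugation, transposition), all of which fix the identity matrix $I_V$. Throughout I would write $A:=R\otimes I_V^{\otimes l}$ and $B:=I_V^{\otimes l}\otimes R$, so that the $(d,m,l)$-gYBE (equation~\ref{gYBE}) becomes the single ``palindromic'' relation $ABA=BAB$ between invertible $d^{m+l}\times d^{m+l}$ matrices.

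For items 1--4 the key observation is that each operation sends $A$ and $B$ to the matrices built in exactly the same way from the transformed $R$. Indeed, $\lambda R$ produces $\lambda A$ and $\lambda B$; $R^{-1}$ produces $A^{-1}=(R\otimes I_V^{\otimes l})^{-1}=R^{-1}\otimes I_V^{\otimes l}$ and similarly $B^{-1}$; $R^{*}$ produces $A^{*}$ and $B^{*}$ since $(X\otimes Y)^{*}=X^{*}\otimes Y^{*}$ and $I_V^{*}=I_V$; and $R^{T}$ produces $A^{T}$ and $B^{T}$ since $(X\otimes Y)^{T}=X^{T}\otimes Y^{T}$. Applying the appropriate operation to both sides of $ABA=BAB$ yields, respectively, $(\lambda A)(\lambda B)(\lambda A)=(\lambda B)(\lambda A)(\lambda B)$, then $A^{-1}B^{-1}A^{-1}=B^{-1}A^{-1}B^{-1}$, then $A^{*}B^{*}A^{*}=B^{*}A^{*}B^{*}$, and finally $A^{T}B^{T}A^{T}=B^{T}A^{T}B^{T}$. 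Although inversion and transposition reverse the order of a product, reversing a palindrome $XYX$ leaves its shape unchanged, so in every case the resulting equation is again precisely the gYBE for the transformed $R$; invertibility of each transformed matrix is clear. The complex adjoint case then follows as the composite $R^{\dagger}=(R^{T})^{*}$.

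For item 5 I would set $W:=Q^{\otimes(m+l)}$ and show that $\widetilde A:=\widetilde R\otimes I_V^{\otimes l}=WAW^{-1}$ and $\widetilde B:=I_V^{\otimes l}\otimes\widetilde R=WBW^{-1}$, where $\widetilde R:=Q^{\otimes m}R(Q^{-1})^{\otimes m}$. This is exactly where functoriality of $\otimes$ does the work: writing $W=Q^{\otimes m}\otimes Q^{\otimes l}$ and expanding $W(R\otimes I_V^{\otimes l})W^{-1}$ by the mixed-product rule, the $Q^{\otimes l}$ factors collapse against $I_V^{\otimes l}$ and one is left with $\big(Q^{\otimes m}R(Q^{-1})^{\otimes m}\big)\otimes I_V^{\otimes l}=\widetilde A$; the same computation with the (equal!) regrouping $W=Q^{\otimes l}\otimes Q^{\otimes m}$ gives $\widetilde B$. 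Since the conjugator $W$ is the same matrix in both identities, $\widetilde A\widetilde B\widetilde A=W(ABA)W^{-1}=W(BAB)W^{-1}=\widetilde B\widetilde A\widetilde B$, and $\widetilde R$ is invertible because $Q$ and $R$ are.

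I expect items 1--4 to be entirely mechanical; the only step requiring genuine care is item 5, namely verifying that the single global conjugator $Q^{\otimes(m+l)}$ simultaneously conjugates $R\otimes I_V^{\otimes l}$ to $\widetilde R\otimes I_V^{\otimes l}$ and $I_V^{\otimes l}\otimes R$ to $I_V^{\otimes l}\otimes\widetilde R$ — that is, that the ``$I_V^{\otimes l}$'' tensor slot is harmlessly conjugated by $Q^{\otimes l}$ in each case. Once this bookkeeping is settled, the proposition is immediate.
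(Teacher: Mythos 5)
Your proof is correct, and for items 1--4 it follows essentially the same route as the paper: apply the relevant operation to both sides of the palindromic relation $ABA=BAB$, use that the operation distributes over the Kronecker product and fixes $I_V$, and note that the order reversal caused by inversion and transposition is harmless because reversing a palindrome does nothing. (The paper leaves this last point implicit in its item 2; you make it explicit, which is a small improvement in clarity.)

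For item 5 your argument is genuinely different from, and cleaner than, the paper's. The paper substitutes $Q^{\otimes m}R(Q^{-1})^{\otimes m}$ directly into the gYBE and simplifies the resulting product by hand, which forces a case split on whether $m>l$ or $m\leq l$ in order to resolve the overlap in $((Q^{-1})^{\otimes m}\otimes I_V^{\otimes l})(I_V^{\otimes l}\otimes Q^{\otimes m})$; in the two cases it ends up exhibiting the transformed triple product as a conjugate of the original one by $Q^{\otimes(m+l)}$ and by $Q^{\otimes m}\otimes I_V^{\otimes(l-m)}\otimes Q^{\otimes m}$ respectively. You instead observe up front that the single matrix $W=Q^{\otimes(m+l)}$, regrouped as $Q^{\otimes m}\otimes Q^{\otimes l}$ or as $Q^{\otimes l}\otimes Q^{\otimes m}$ (these are literally the same matrix since all factors equal $Q$), conjugates $R\otimes I_V^{\otimes l}$ to $\widetilde R\otimes I_V^{\otimes l}$ and $I_V^{\otimes l}\otimes R$ to $I_V^{\otimes l}\otimes\widetilde R$ simultaneously, after which the conclusion is one line. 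This eliminates the case analysis entirely and makes transparent \emph{why} the symmetry holds: the transformed gYBE is a global conjugate of the original. Both arguments are valid; yours is shorter and more conceptual, while the paper's explicit expansion has the minor virtue of displaying the intermediate Kronecker identities it relies on.
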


These symmetries are well known in the case of both the bYBE and aYBE and appear in \cite{kassel_quantum_1995} and \cite{hietarinta_solving_1993}, a proof that these also apply to the gYBEs is provided in the appendix.

\section{Universal Quantum Gates}
Recall as in \cite{brylinski_universal_2002} that a unitary matrix (acting on $n$ qudits) is \textit{universal} for quantum computation if it together with all \textit{local} unitary transformations from $V\rightarrow V$ (single qudit gates) generate a dense subgroup of $U(d^n)$. A unitary matrix is \textit{exactly universal} if the full group $U(d^n)$ is generated. Brylinski showed in \cite{brylinski_universal_2002} that a two qudit gate $U$ is universal if and only if it is entangling. That is, if there is a state $\ket{ij} \in V\otimes V$ such that $U \ket{ij}$ cannot be written as the tensor product of two qubits. For example the CNOT gate defined by $\ket{ij}\rightarrow \ket{i,i\oplus j}$ is exactly universal since it sends the state $\ket{00}+\ket{10} $ to $ \ket{00}+\ket{11}$ \cite{brylinski_universal_2002}. 

\begin{equation}
\begin{pmatrix}
1 & 0 & 0 & 0\\
0 & 1 & 0 & 0\\
0 & 0 & 0 & 1\\
0 & 0 & 1 & 0
\end{pmatrix}
\begin{pmatrix}
1\\
0\\
1\\
0\\
\end{pmatrix}
=
\begin{pmatrix}
1\\
0\\
0\\
1\\
\end{pmatrix}
\end{equation}

The entangling condition also provides a connection between topological entanglement and quantum entanglement. Any $R$ matrix gives rise to a knot and link invariant \cite{turaev_yang-baxter_1988}, and if $R$ is not entangling it can not be used to distinguish between two knots \cite{alagic_yangbaxter_2016}. While almost every unitary gate is universal \cite{deutsch_universality_1995}, finding matrices that are both universal and solve the Yang-Baxter equation is a complex task. It is shown by Kauffman and Lomonaco in \cite{kauffman_braiding_2004} that the following unitary solutions to the Yang-Baxter equation are also exactly universal as two qubit gates.

\begin{gather*}
\begin{pmatrix}
1 & 0 & 0 & 0\\
0 & 1 & 0 & 0\\
0 & 0 & 1 & 0\\
0 & 0 & 0 & -1
\end{pmatrix}
\qquad 
\begin{pmatrix}
1 & 0 & 0 & 0\\
0 & 0 & 1 & 0\\
0 & 1 & 0 & 0\\
0 & 0 & 0 & -1
\end{pmatrix}
\qquad 
\frac{1}{\sqrt{2}}
\begin{pmatrix}
1 & 0 & 0 & 1\\
0 & 1 & 1 & 0\\
0 & 1 & -1 & 0\\
-1 & 0 & 0 & 1
\end{pmatrix}
\end{gather*}

The CNOT gate generalizes to the controlled increment gate $C^n_{X,d}$, which is defined recursively in \cite{hunt_grovers_2020}. This is recalled next. Let $n$ be the number of qudits being acted on. Let $X_d$ denote the $d \times d$ increment gate (INC):

\begin{align}
    X_d &= \begin{pmatrix} 
    0 & 0 & \dots  & 0 & 1\\
    1 & 0 & \dots  & 0 & 0\\
    0 & 1 & \dots  & 0 & 0\\
    \vdots & \vdots & \ddots & \vdots & \vdots \\
   0 & 0 & \dots  & 1 & 0
    \end{pmatrix}
\end{align}

CNOT (or sometimes CINC, the controlled increment gate) is then defined as in \cite{hunt_grovers_2020} by letting $C^1_{X,d} = X_d$ and then recursively constructing

\begin{align}
    C^n_{X,d} &= \begin{pmatrix} 
    I & 0 & 0 & \dots  & 0\\
    0 & C^{n-1}_{X,d} & 0 & \dots  & 0\\
    0 & 0 & I & \dots  & 0\\
    \vdots & \vdots & \vdots & \ddots & \vdots \\
   0 & 0 & 0 & \dots  & I
    \end{pmatrix}
\end{align}

Note that the identity matrix block is repeated $n-2$ times in the lower right. Induction on $n$ shows that $C^n_{X,d}$ is a real unitary matrix for all $n$.

\begin{proof}

When $n=1$ we have $C^1_{X,d} = X_d$ and $X_d X_d^T = I$. Now suppose $C^n_{X,d}$ is unitary up to some $n>1$.
    
\begin{align*}
    C^n_{X,d} (C^n_{X,d})^\dagger &= \begin{pmatrix} 
    I & 0 & 0 & \dots  & 0\\
    0 & C^{n-1}_{X,d} & 0 & \dots  & 0\\
    0 & 0 & I & \dots  & 0\\
    \vdots & \vdots & \vdots & \ddots & \vdots \\
   0 & 0 & 0 & \dots  & I
    \end{pmatrix}
    \begin{pmatrix} 
    I & 0 & 0 & \dots  & 0\\
    0 & (C^{n-1}_{X,d})^\dagger & 0 & \dots  & 0\\
    0 & 0 & I & \dots  & 0\\
    \vdots & \vdots & \vdots & \ddots & \vdots \\
   0 & 0 & 0 & \dots  & I
    \end{pmatrix}\\
    &=\begin{pmatrix} 
    I & 0 & 0 & \dots  & 0\\
    0 & C^{n-1}_{X,d}(C^{n-1}_{X,d})^\dagger & 0 & \dots  & 0\\
    0 & 0 & I & \dots  & 0\\
    \vdots & \vdots & \vdots & \ddots & \vdots \\
   0 & 0 & 0 & \dots  & I
    \end{pmatrix}=\begin{pmatrix}
    I & 0 & 0 & \dots  & 0\\
    0 & I & 0 & \dots  & 0\\
    0 & 0 & I & \dots  & 0\\
    \vdots & \vdots & \vdots & \ddots & \vdots \\
   0 & 0 & 0 & \dots  & I
    \end{pmatrix}
\end{align*}
\end{proof}

The CNOT gate appears in this family as $C^2_{X,d}$. Since the CNOT gate $C_{X,d}^2$ along with all 1-qudit gates is an exactly universal gate set, another way to prove a particular matrix is exactly universal is to show that CNOT can be expressed using that matrix and the Kronecker product of local unitary matrices \cite{brylinski_universal_2002}. This method also has the advantage of immediately providing a way to re-express any algorithm that is written in terms of CNOT and local gates.

\subsection{A Universal Yang-Baxter Operator}

Unitary representations of the braid group serve as the gates in a topological quantum computer \cite{kauffman_braiding_2004}. It is therefore desirable to find unitary solutions the bYBE or gYBE which are also universal for quantum computation. Finding solutions to any of the variation Yang-Baxter equation is generally a difficult task which reduces to solving a large system of multivariate polynomial equations. In particular, the bYBE, when the vector space $V$ has dimension $d$, involves solving a system of $d^6$ cubic equations in $d^4$ variables. What follows is one of the primary contributions of this paper: a not previously noted universal unitary solution to the bYBE in all dimensions. To construct the solution, first recall the discrete quantum Fourier transform denoted $F_d$ and with $\omega = e^{i2\pi/d}$:
\begin{align}
    F_d &= \frac{1}{\sqrt{d}}\begin{bmatrix} 
    1 & 1 & 1 & \dots  & 1\\
    1 & \omega & \omega^2 & \dots  & \omega^{d-1}\\
    1 & \omega^2 & \omega^4 & \dots  & \omega^{2(d-1)}\\
    \vdots & \vdots & \vdots & \ddots & \vdots \\
    1 & \omega^{d-1} & \omega^{2(d-1)} & \dots  & \omega^{(d-1)(d-1)}
    \end{bmatrix}
\end{align}

To construct the bYBE solution we define the following unitary solution to the aYBE:

\begin{align}
\label{diagSol}
    R_d&=(I\otimes F_d) C^2_{X,d}  (I\otimes F_d^{\dagger})
\end{align}
In theorem \ref{diagonalTheorem} we show that this solution is part of a larger family of diagonal unitary solutions to the aYBE, any of which can be converted to a solution of the bYBE by composing with the swap operator $P$. 
\begin{theorem}
\label{diagonalTheorem}
All $d^2 \times d^2$ diagonal matrices are solutions to the aYBE in dimension $d$ and in particular $R_d$ provides an example of an exactly universal unitary solution to the aYBE.
\end{theorem}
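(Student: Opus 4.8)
The plan is to prove the general statement first and then dispatch $R_d$ via three observations: that it is diagonal, that it is unitary, and that it is exactly universal.

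\textbf{Step 1 (every diagonal matrix solves the aYBE).} I would start from the remark that if $R$ is diagonal then all three of $R_{12}=R\otimes I$, $R_{23}=I\otimes R$, and $R_{13}=(I\otimes P)(R\otimes I)(I\otimes P)$ are diagonal: the Kronecker product of diagonal matrices is diagonal, and $R_{13}$ is the conjugate of the diagonal matrix $R\otimes I$ by the permutation matrix $I\otimes P$ (which is its own inverse, since $P^2=I$), so it too is diagonal — conjugating a diagonal matrix by a permutation merely reorders its diagonal entries. Since diagonal matrices pairwise commute, both $R_{12}R_{13}R_{23}$ and $R_{23}R_{13}R_{12}$ equal the same product of three diagonal matrices in an arbitrary order, so equation (\ref{AlgebraicYBE}) holds automatically. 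This settles the first claim.

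\textbf{Step 2 ($R_d$ is a diagonal unitary).} Next I would check that $R_d=(I\otimes F_d)\,C^2_{X,d}\,(I\otimes F_d^{\dagger})$ is diagonal. Viewing $d^2\times d^2$ matrices as $d\times d$ arrays of $d\times d$ blocks indexed by the control (first) qudit, the gate $C^2_{X,d}$ is block diagonal, equal to $\operatorname{diag}(I,X_d,I,\dots,I)$, while $I\otimes F_d$ is block diagonal with $F_d$ in every block. Hence $R_d$ is block diagonal with blocks $F_dF_d^{\dagger}=I$ and $F_dX_dF_d^{\dagger}$. Since $X_d$ is the cyclic shift, whose eigenvectors are exactly the columns of $F_d$ with eigenvalues the powers of $\omega$, the block $F_dX_dF_d^{\dagger}$ is the diagonal matrix $\operatorname{diag}(1,\omega,\dots,\omega^{d-1})$. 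Therefore $R_d$ is diagonal, it is unitary as a product of unitaries, and by Step 1 it is an aYBE solution. As a sanity check, for $d=2$ this yields $R_2=\operatorname{diag}(1,1,1,-1)$, the first of the Kauffman--Lomonaco universal gates listed above.

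\textbf{Step 3 (exact universality).} For the final claim I would rearrange the definition into $C^2_{X,d}=(I\otimes F_d^{\dagger})\,R_d\,(I\otimes F_d)$, which exhibits the controlled increment as $R_d$ composed on each side with a local single-qudit unitary. Consequently the group generated by $R_d$ together with all one-qudit gates contains $C^2_{X,d}$, hence contains the group generated by $C^2_{X,d}$ together with all one-qudit gates, which — as recalled earlier in the paper — is all of $U(d^2)$. Therefore $R_d$ together with the local unitaries generates $U(d^2)$, i.e. $R_d$ is exactly universal.

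I do not expect a genuine obstacle: the argument is essentially bookkeeping once Step 1 is available. The one nontrivial input is the Fourier diagonalization of the increment gate, and the only place real care is needed is convention-matching — the $aB$ Kronecker ordering used in the paper, the choice of control qudit in $C^2_{X,d}$, and the precise eigenvalue ordering in $F_dX_dF_d^{\dagger}$ — so that the blockwise computation of $R_d$ comes out exactly as claimed. An alternative to Step 3 would be to invoke Brylinski's entangling criterion and exhibit a product state that $R_d$ maps to an entangled state, but expressing $C^2_{X,d}$ through $R_d$ and local gates is cleaner and, as the paper notes, immediately ports any CNOT-based circuit to the new gate.
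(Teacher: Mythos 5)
Your proposal is correct. Steps 2 and 3 — the block computation showing $R_d$ is diagonal via the Fourier diagonalization of $X_d$, and the universality argument via $C^2_{X,d}=(I\otimes F_d^{\dagger})\,R_d\,(I\otimes F_d)$ — match the paper's proof essentially line for line. Where you genuinely depart is in the central general claim that every diagonal matrix solves the aYBE. The paper works in the component form of equation \ref{eq:aYBE}: it identifies which terms of the Einstein-summed products survive when $R$ is diagonal and checks that the surviving monomials on the two sides coincide. You instead argue at the operator level: $R_{12}$ and $R_{23}$ are Kronecker products of diagonal matrices, $R_{13}$ is the conjugate of the diagonal matrix $R\otimes I$ by the permutation $I\otimes P$ (an involution, since $P^2=I$), so all three of $R_{12}$, $R_{13}$, $R_{23}$ are diagonal and hence pairwise commute, making the two sides of equation \ref{AlgebraicYBE} identical as reorderings of the same product. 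Your argument is shorter, coordinate-free, and makes the reason for the result transparent (commutativity of diagonal matrices); the paper's index computation is more pedestrian but has the side benefit of exhibiting exactly which polynomial equations in \ref{eq:aYBE} degenerate and how, which is the kind of bookkeeping used later when hunting for non-diagonal solutions by ansatz. Both are complete proofs, and your $d=2$ sanity check $R_2=\operatorname{diag}(1,1,1,-1)$ is accurate.
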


\begin{proof}
First note that $R_d$ is unitary since it is the product of unitary matrices. To express $C^2_{X,d}$ in terms of $R_d$ and the Kronecker product of local linear transformations we can conjugate $R_d$ by $(I\otimes F_d)$ as follows:

\begin{align*}
    C^2_{X,d} &= (I \otimes F_d^\dagger) R_d (I \otimes F_d)
\end{align*} 

To see that $R_d$ is a solution to the algebraic Yang-Baxter equation we can simplify the form of $R_d$ using block matrix multiplication:

\begin{align*}
    R_d&=(I\otimes F_d) C^2_{X,d}  (I\otimes F_d^{\dagger})\\
    &= \begin{pmatrix} 
    F_d & 0 & 0 & \dots  & 0\\
    0 & F_d & 0 & \dots  & 0\\
    0 & 0 & F_d & \dots  & 0\\
    \vdots & \vdots & \vdots & \ddots & \vdots \\
   0 & 0 & 0 & \dots  & F_d
    \end{pmatrix}
    \begin{pmatrix} 
    I & 0 & 0 & \dots  & 0\\
    0 & X_d & 0 & \dots  & 0\\
    0 & 0 & I & \dots  & 0\\
    \vdots & \vdots & \vdots & \ddots & \vdots \\
   0 & 0 & 0 & \dots  & I
    \end{pmatrix}
    \begin{pmatrix} 
    F_d^\dagger & 0 & 0 & \dots  & 0\\
    0 & F_d^\dagger & 0 & \dots  & 0\\
    0 & 0 & F_d^\dagger & \dots  & 0\\
    \vdots & \vdots & \vdots & \ddots & \vdots \\
   0 & 0 & 0 & \dots  & F_d^\dagger
    \end{pmatrix}\\
    &=\begin{pmatrix} 
    I & 0 & 0 & \dots  & 0\\
    0 & F_d X_d F_d^\dagger& 0 & \dots  & 0\\
    0 & 0 & I & \dots  & 0\\
    \vdots & \vdots & \vdots & \ddots & \vdots \\
   0 & 0 & 0 & \dots  & I
    \end{pmatrix}
\end{align*}

This matrix turns out to be diagonal since the Fourier transform diagonalizes $X_d$. The characteristic polynomial of $X_d$ is $\lambda^d-1$ and therefore the eigenvalues of $X_d$ are the $d$ roots of unity: $e^{\frac{k 2 \pi i}{d}}$ for $k=1\dots d$. It is then straightforward to check that the corresponding eigenvectors are given by the columns of $F_d$.

 To see that this solves the aYBE, we prove the stronger result that any diagonal matrix is a solution to the aYBE. Using the indexing convention from described in section \ref{secaYBE}, an invertible diagonal matrix will have nonzero entries along the diagonal: $R_{ab}^{ab}\neq 0$ and zero entries elsewhere. Now consider one of the equations indexed by $(j_1,j_2,j_3,l_1,l_2,l_3)$. All terms in the sum on the left side of equation \ref{eq:aYBE} will vanish unless each variable is from the diagonal of $R$, that is unless $k_1=j_1$, $l_1=k_1$, $l_2=k_2$, $k_2=j_2$, $k_3=j_3$, $l_3=k_3$. By the same reasoning the terms in the sum on the right hand side will vanish unless $k_2=j_2$, $k_1=j_1$, $l_1=k_1$, $l_2=k_2$, $j_3=k_3$, $k_3=l_3$, $k_2=l_2$. By considering only the nonzero terms of the sum, the algebraic Yang-Baxter equation is satisfied whenever $R$ is a diagonal matrix:
\begin{align*}
    R_{j_1l_2}^{j_1l_2}R_{j_1j_3}^{j_1j_3}R_{j_2j_3}^{j_2j_3} = R_{j_2j_3}^{j_2j_3}R_{j_1j_3}^{j_1j_3}R_{j_1l_2}^{j_1l_2}
\end{align*}
\end{proof}
As a consequence $R_d P$ is a unitary solution to the bYBE, and by the symmetries above $(Q\otimes Q)R_d(Q\otimes Q)^{-1}$ is also a universal unitary solution to the Yang Baxter equation whenever $Q$ is a complex $d\times d$ unitary matrix. This provides a way to generate many non-trivial examples of unitary solutions to the Yang-Baxter equation, which are also universal as quantum gates, and provide an explicit decomposition of the CNOT gate. For example, when $d=3$, if $Q=F_3^3$ we obtain the following unitary solution to the bYBE:

\begin{equation*}
\resizebox{\textwidth}{!}{%
$
\frac{1}{6}\begin{pmatrix}
\begin{array}{ccccccccc}
 4 & 1-i \sqrt{3} & 1+i \sqrt{3} & 0 & 0 & 0 & 2 & -1+i \sqrt{3} & -1-i \sqrt{3} \\
 2 & -1+i \sqrt{3} & -1-i \sqrt{3} & 4 & 1-i \sqrt{3} & 1+i \sqrt{3} & 0 & 0 & 0 \\
 0 & 0 & 0 & 2 & -1+i \sqrt{3} & -1-i \sqrt{3} & 4 & 1-i \sqrt{3} & 1+i \sqrt{3} \\
 1+i \sqrt{3} & 4 & 1-i \sqrt{3} & 0 & 0 & 0 & -1-i \sqrt{3} & 2 & -1+i \sqrt{3} \\
 -1-i \sqrt{3} & 2 & -1+i \sqrt{3} & 1+i \sqrt{3} & 4 & 1-i \sqrt{3} & 0 & 0 & 0 \\
 0 & 0 & 0 & -1-i \sqrt{3} & 2 & -1+i \sqrt{3} & 1+i \sqrt{3} & 4 & 1-i \sqrt{3} \\
 1-i \sqrt{3} & 1+i \sqrt{3} & 4 & 0 & 0 & 0 & -1+i \sqrt{3} & -1-i \sqrt{3} & 2 \\
 -1+i \sqrt{3} & -1-i \sqrt{3} & 2 & 1-i \sqrt{3} & 1+i \sqrt{3} & 4 & 0 & 0 & 0 \\
 0 & 0 & 0 & -1+i \sqrt{3} & -1-i \sqrt{3} & 2 & 1-i \sqrt{3} & 1+i \sqrt{3} & 4 \\
\end{array}
\end{pmatrix}$%
}
\end{equation*}

As any diagonal matrix solves the aYBE, other universal unitary solutions to the bYBE can be generated in a similar way. The conditions under which an arbitrary diagonal matrix is universal is provided in \cite{brylinski_universal_2002}. While we've seen that any diagonal matrix is a solution to the aYBE, the bYBE only has one family of diagonal solutions.

\begin{theorem}
The only invertible diagonal solutions to the braided Yang-Baxter equation are scalar multiples of the identity matrix.
\end{theorem}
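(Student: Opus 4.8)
The plan is to exploit the fact that when $R$ is diagonal every matrix appearing in \eqref{braidedYBE} is diagonal, so that all six factors commute and the cubic braid relation collapses to a purely scalar condition. Write $r_{ij} := R^{ij}_{ij}$ for the diagonal entries of $R$; these are all nonzero since $R$ is invertible. Working on $V^{\otimes 3}$ with the ordered tensor basis $e_i \otimes e_j \otimes e_k$, the operator $R \otimes I$ acts diagonally, scaling $e_i \otimes e_j \otimes e_k$ by $r_{ij}$, while $I \otimes R$ scales it by $r_{jk}$. Because diagonal matrices commute, the left-hand side of \eqref{braidedYBE} reduces to $(R\otimes I)^2(I\otimes R)$ and the right-hand side to $(I\otimes R)^2(R\otimes I)$.

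Next I would evaluate both sides on the basis vector $e_i \otimes e_j \otimes e_k$: the left gives $r_{ij}^2\, r_{jk}$ and the right gives $r_{ij}\, r_{jk}^2$. Equating these and cancelling the nonzero scalar $r_{ij} r_{jk}$ yields
\[
r_{ij} = r_{jk} \qquad \text{for all } i,j,k \in \{1,\dots,d\}.
\]
Setting $k = j$ shows $r_{ij} = r_{jj}$, so $r_{ij}$ is independent of $i$; substituting this back into the displayed identity gives $r_{jj} = r_{kk}$ for all $j,k$, i.e. a single common value $\lambda$. Hence $R = \lambda I_{d^2}$. Conversely, any scalar multiple of the identity trivially satisfies \eqref{braidedYBE} (this also follows from the $l \geq m$ theorem with $m=2$), so the family is nonempty and this is the complete list of invertible diagonal solutions.

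The computation presents no genuine obstacle; the only point requiring care is the index bookkeeping under the $aB$ Kronecker convention — confirming that $R \otimes I$ reads off the entry indexed by the first two tensor slots and $I \otimes R$ the entry indexed by the last two — together with the observation that diagonality makes all three factors commute, which is exactly what degrades the cubic relation into the linear condition $r_{ij} = r_{jk}$ and forces the collapse to a scalar matrix.
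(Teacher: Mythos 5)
Your proof is correct and follows essentially the same route as the paper's: diagonality makes $(R\otimes I)$ and $(I\otimes R)$ commute, the cubic relation cancels down to $(R\otimes I)=(I\otimes R)$ (your $r_{ij}=r_{jk}$ in index form), and comparing diagonal entries forces all of them to coincide. The only cosmetic difference is that you track entries with the double index $r_{ij}$ on basis vectors of $V^{\otimes 3}$ while the paper writes $r_1,\dots,r_{d^2}$ and subtracts the two matrices, but the argument is the same.
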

\begin{proof}
Let $R = diag(r_1, \dots, r_{d^2})$ be an invertible diagonal matrix. Since both $(R\otimes I)$ and $(I\otimes R)$ are diagonal and will commute, the left hand side of the braided Yang-Baxter equation becomes $(R \otimes I)(I \otimes R) (R\otimes I) = (R^2\otimes I)(I \otimes R)$ and the right hand side can be written $(I \otimes R) (R\otimes I) (I \otimes R) = (R\otimes I)(I\otimes R^2)$. The bYBE then becomes: 
\begin{align*}
    (I\otimes R^2)(R\otimes I)&=(I \otimes R)(R^2\otimes I)\\
    (I\otimes R)&=(R\otimes I)\\
\end{align*}
Examining the terms on the diagonal of $(I\otimes R)-(R\otimes I)$ we get the equations: $r_2-r_1=0$, $r_3-r_1 =0$, $\dots$ , $r_{d^2}-r_1 =0$.
Therefore the only invertible solution is when $r_1=r_2=\dots=r_{d^2}$
\end{proof}

\section{$X$-shaped solutions to the gYBE}

The permutation solutions to a gYBE can be found by brute force computation in low dimensions. From a permutation matrix solution one can construct a \textit{monomial} solution by replacing the 1's with variables and solving for the conditions under which the new matrix is a solution. The $(2,3,1)$ and $(2,3,2)$ monomial solutions have been classified fully in \cite{nemec_monomial_2014}. In contrast to the bYBE, the $(2,3,1)$ and $(2,3,2)$ gYBE's don't have any monomial solutions with $d$ free parameters. Other than the monomial solutions in \cite{nemec_monomial_2014}, there are currently only a few known solutions to the non-bYBE $(d,m,l)$-gYBE up to the symmetries in proposition \ref{symmetries}. One of the well known solutions is the \textit{X-shape} solution to the $(2,3,2)$-gYBE that appears in \cite{rowell_extraspecial_2010}:

\begin{align*}
R_X=
\frac{1}{\sqrt{2}}
\begin{pmatrix}
1 & 0 & 0 & 0 & 0 & 0 & 0 & 1\\
0 & 1 & 0 & 0 & 0 & 0 & 1 & 0\\
0 & 0 & 1 & 0 & 0 & 1 & 0 & 0\\
0 & 0 & 0 & 1 & 1 & 0 & 0 & 0\\
0 & 0 & 0 & -1& 1 & 0 & 0 & 0\\
0 & 0 & -1 & 0 & 0 & 1 & 0 & 0\\
0 & -1 & 0 & 0 & 0 & 0 & 1 & 0\\
-1 & 0 & 0 & 0 & 0 & 0 & 0 & 1\\
\end{pmatrix}
\end{align*}

After that solution was found a handful of other solutions were found in \cite{chen_generalized_2012}, \cite{rowell_parameter-dependent_2014}, \cite{kitaev_solutions_2012}. The question of finding all unitary solutions with nonzero entries in the same position as the nonzero entries of $R_X$ is posed in \cite{chen_generalized_2012}. This question demonstrates another more general method used to find solutions: pick an ansatz or initial guess about the form of the matrix, and in some cases, this will simplify the system of polynomial equations enough that they can be fully solved. An X-shaped solution will be universal since it entangles the state $\ket{0}$.

\subsection{(2,3,2)-gYBE X-Shaped Solutions}
What follows is a description of the procedure used to obtain all 4 distinct families of X-shaped solutions to the $(2,3,2)$-gYBE. 

\begin{align}
X=\begin{pmatrix}
r_{11} & 0 & 0 & 0 & 0 & 0 & 0 & r_{18} \\
 0 & r_{22} & 0 & 0 & 0 & 0 & r_{27} & 0 \\
 0 & 0 & r_{33} & 0 & 0 & r_{36} & 0 & 0 \\
 0 & 0 & 0 & r_{44} & r_{45} & 0 & 0 & 0 \\
 0 & 0 & 0 & r_{54} & r_{55} & 0 & 0 & 0 \\
 0 & 0 & r_{63} & 0 & 0 & r_{66} & 0 & 0 \\
 0 & r_{72} & 0 & 0 & 0 & 0 & r_{77} & 0 \\
 r_{81} & 0 & 0 & 0 & 0 & 0 & 0 & r_{88} \\
\end{pmatrix}
\end{align}

The general $(2,3,2)$ X-shaped ansatz above generates a set of $116$ polynomial equations in $16$ variables determined by the gYBE. The variable $r_{22}$ appears in the most equations and can be scaled to 1 using the overall scaling symmetry since all variables are assumed to be nonzero. After this scaling, the following equations are in the set:

\begin{align*}
    -r_{36}r_{63} (r_{55}-1) &=0\\
    r_{36} r_{63} (r_{44} - r_{77})&=0
\end{align*}
and therefore $r_{55}=1$ and $r_{77}=r_{44}$. After making these substitutions $108$ equations remain. This system is then small enough that a Gr\"{o}bner basis can be computed using a computer algebra system. We used Mathematica to compute a Gr\"{o}bner basis with a lexicographic monomial ordering, and then used the reduce function to find all solutions to the system. Initially this procedure results in $7$ solutions, which are listed below, with $ \alpha  ,\beta , \gamma, \delta$ being complex free parameters:

\setlength\arraycolsep{1.5pt}

\noindent \begin{minipage}{.5\linewidth}
\begin{align*}
X_1=& \begin{pmatrix}
 \delta & 0 & 0 & 0 & 0 & 0 & 0 & \frac{ \alpha \beta^2}{\delta^2} \\
 0 & 1 & 0 & 0 & 0 & 0 &  \beta & 0 \\
 0 & 0 & \delta & 0 & 0 &  \alpha  & 0 & 0 \\
 0 & 0 & 0 & 1 & \frac{\delta^2}{ \beta} & 0 & 0 & 0 \\
 0 & 0 & 0 &  \beta & 1 & 0 & 0 & 0 \\
 0 & 0 & \frac{1}{\alpha} & 0 & 0 & \delta & 0 & 0 \\
 0 & \frac{\delta^2}{ \beta} & 0 & 0 & 0 & 0 & 1 & 0 \\
 \frac{\delta^2}{\alpha  \beta^2} & 0 & 0 & 0 & 0 & 0 & 0 & \delta\\
\end{pmatrix}\\
X_2=& \begin{pmatrix}
 -i & 0 & 0 & 0 & 0 & 0 & 0 & i \alpha  \beta^2 \\
 0 & 1 & 0 & 0 & 0 & 0 &  \beta & 0 \\
 0 & 0 & 1 & 0 & 0 & \alpha & 0 & 0 \\
 0 & 0 & 0 & -i & \frac{i}{ \beta} & 0 & 0 & 0 \\
 0 & 0 & 0 &  \beta & 1 & 0 & 0 & 0 \\
 0 & 0 & \frac{i}{\alpha} & 0 & 0 & -i & 0 & 0 \\
 0 & \frac{i}{ \beta} & 0 & 0 & 0 & 0 & -i & 0 \\
 \frac{1}{\alpha  \beta^2} & 0 & 0 & 0 & 0 & 0 & 0 & 1 \\
\end{pmatrix} \\
X_3=& \begin{pmatrix}
 1 & 0 & 0 & 0 & 0 & 0 & 0 & \alpha  \beta^2 \\
 0 & 1 & 0 & 0 & 0 & 0 & - \beta & 0 \\
 0 & 0 & 1 & 0 & 0 & \alpha & 0 & 0 \\
 0 & 0 & 0 & 1 & -\frac{1}{ \beta} & 0 & 0 & 0 \\
 0 & 0 & 0 &  \beta & 1 & 0 & 0 & 0 \\
 0 & 0 & -\frac{1}{\alpha} & 0 & 0 & 1 & 0 & 0 \\
 0 & \frac{1}{ \beta} & 0 & 0 & 0 & 0 & 1 & 0 \\
 -\frac{1}{\alpha  \beta^2} & 0 & 0 & 0 & 0 & 0 & 0 & 1 \\
\end{pmatrix}
\end{align*}
\end{minipage}%
\begin{minipage}{.5\linewidth}
\begin{align*}
X_5=& \begin{pmatrix}
 i & 0 & 0 & 0 & 0 & 0 & 0 & -i \alpha  \beta^2 \\
 0 & 1 & 0 & 0 & 0 & 0 &  \beta & 0 \\
 0 & 0 & 1 & 0 & 0 & \alpha & 0 & 0 \\
 0 & 0 & 0 & i & -\frac{i}{ \beta} & 0 & 0 & 0 \\
 0 & 0 & 0 &  \beta & 1 & 0 & 0 & 0 \\
 0 & 0 & -\frac{i}{\alpha} & 0 & 0 & i & 0 & 0 \\
 0 & -\frac{i}{ \beta} & 0 & 0 & 0 & 0 & i & 0 \\
 \frac{1}{\alpha  \beta^2} & 0 & 0 & 0 & 0 & 0 & 0 & 1 \\
\end{pmatrix} \\
X_6=& \begin{pmatrix}
1 & 0 & 0 & 0 & 0 & 0 & 0 & \alpha  \beta^2 \\
 0 & 1 & 0 & 0 & 0 & 0 &  \beta & 0 \\
 0 & 0 & 1 & 0 & 0 & \alpha & 0 & 0 \\
 0 & 0 & 0 & 1 & \frac{1}{ \beta} & 0 & 0 & 0 \\
 0 & 0 & 0 &  \beta & 1 & 0 & 0 & 0 \\
 0 & 0 & \frac{1}{\alpha} & 0 & 0 & 1 & 0 & 0 \\
 0 & \frac{1}{ \beta} & 0 & 0 & 0 & 0 & 1 & 0 \\
 \frac{1}{a  \beta^2} & 0 & 0 & 0 & 0 & 0 & 0 & 1 \\
\end{pmatrix}  \\
X_7=& \begin{pmatrix}
 1 & 0 & 0 & 0 & 0 & 0 & 0 & \frac{\alpha  \beta^2}{\gamma} \\
 0 & 1 & 0 & 0 & 0 & 0 &  \beta & 0 \\
 0 & 0 & \gamma & 0 & 0 & \alpha & 0 & 0 \\
 0 & 0 & 0 & \gamma & \frac{\gamma}{ \beta} & 0 & 0 & 0 \\
 0 & 0 & 0 &  \beta & 1 & 0 & 0 & 0 \\
 0 & 0 & \frac{\gamma}{\alpha} & 0 & 0 & 1 & 0 & 0 \\
 0 & \frac{\gamma}{ \beta} & 0 & 0 & 0 & 0 & \gamma & 0 \\
 \frac{\gamma^2}{\alpha  \beta^2} & 0 & 0 & 0 & 0 & 0 & 0 & \gamma \\
\end{pmatrix}
\end{align*}
\end{minipage}

\begin{align*}
X_4=& \begin{pmatrix}
 2-\delta & 0 & 0 & 0 & 0 & 0 & 0 & \frac{\alpha \beta^2}{\delta^2-2 \delta+2} \\
 0 & 1 & 0 & 0 & 0 & 0 &  \beta & 0 \\
 0 & 0 & 2-\delta & 0 & 0 & \alpha & 0 & 0 \\
 0 & 0 & 0 & 1 & \frac{\delta^2-2 \delta+2}{ \beta} & 0 & 0 & 0 \\
 0 & 0 & 0 &  \beta & 1 & 0 & 0 & 0 \\
 0 & 0 & \frac{1}{\alpha} & 0 & 0 & \delta & 0 & 0 \\
 0 & \frac{\delta^2-2 \delta+2}{ \beta} & 0 & 0 & 0 & 0 & 1 & 0 \\
 \frac{\delta^2-2 \delta+2}{\alpha \beta^2} & 0 & 0 & 0 & 0 & 0 & 0 & \delta \\
\end{pmatrix}
\end{align*}

The matrices $X_6$ and $X_7$ are not invertible for any choice of the parameters. The matrices $X_2$ and $X_5$ are from the same family after utilizing the symmetries in proposition \ref{symmetries}. In particular, $X_5=2*X_2^{-1}$ with $ \alpha $ replaced with $i \alpha$ and $ \beta$ replaced with $i \beta$. We now determine if there is a choice of the parameters and overall scale factor $\lambda$ which make $X_1$, $X_2$, $X_3$, $X_4$ unitary.

\begin{proposition}
$\lambda X_1$ is unitary when $\operatorname{Re}(\delta)=0$, $\delta^2 = -| \beta|^2$, $|  \alpha  |=1$, $|\lambda|^2 = \frac{1}{1+|\delta|^2}$
\end{proposition}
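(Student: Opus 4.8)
The plan is to use the ``X-shape'' of $X_1$ to reduce the problem to four independent $2\times 2$ unitarity checks. Since the only nonzero entries of $X_1$ sit on its diagonal and anti-diagonal, the simultaneous permutation of rows and columns that relists the basis in the order $e_1,e_8,e_2,e_7,e_3,e_6,e_4,e_5$ conjugates $X_1$ to the block-diagonal matrix $A_1\oplus A_2\oplus A_3\oplus A_4$, where
\[
A_1=\begin{pmatrix}\delta & \alpha\beta^2/\delta^2\\ \delta^2/(\alpha\beta^2) & \delta\end{pmatrix},\quad
A_2=\begin{pmatrix}1 & \beta\\ \delta^2/\beta & 1\end{pmatrix},\quad
A_3=\begin{pmatrix}\delta & \alpha\\ 1/\alpha & \delta\end{pmatrix},\quad
A_4=\begin{pmatrix}1 & \delta^2/\beta\\ \beta & 1\end{pmatrix}.
\]
Conjugation by a permutation matrix preserves unitarity, so $\lambda X_1$ is unitary if and only if every $\lambda A_j$ is. Note that $\alpha,\beta,\delta$ are all nonzero, since they appear in denominators of $X_1$, so these blocks are well defined.

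Next I would translate the three hypotheses into relations among the entries: $\operatorname{Re}(\delta)=0$ is equivalent to $\overline{\delta}=-\delta$, hence $\overline{\delta^2}=\delta^2$; combined with $\delta^2=-|\beta|^2$ this yields both $|\delta|^2=|\beta|^2$ and the simplification $\delta^2/\beta=-\overline{\beta}$; and $|\alpha|=1$ is equivalent to $\overline{\alpha}=1/\alpha$. The main step is then to compute $A_j^{\dagger}A_j$ for $j=1,2,3,4$ and check that each equals $(1+|\delta|^2)\,I_2$. The two diagonal entries of $A_j^{\dagger}A_j$ are the squared norms of the columns of $A_j$; using $|\delta|^2=|\beta|^2$ and $|\alpha|=1$ all eight of them collapse to $1+|\delta|^2$. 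The off-diagonal entry is the Hermitian inner product of the two columns, and in each block the two summands cancel: for $A_2$ it is $\overline{1}\cdot\beta+\overline{\delta^2/\beta}\cdot 1=\beta+\overline{(-\overline{\beta})}=\beta-\beta=0$, for $A_3$ it is $\overline{\delta}\,\alpha+\overline{(1/\alpha)}\,\delta=\alpha(\overline{\delta}+\delta)=2\alpha\operatorname{Re}(\delta)=0$, for $A_4$ it is $-\overline{\beta}+\overline{\beta}=0$, and the analogous cancellation for $A_1$ follows after substituting $\overline{\beta}=-\delta^2/\beta$ throughout.

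Once $A_j^{\dagger}A_j=(1+|\delta|^2)I_2$ is established for all four blocks, the conclusion is immediate: $(\lambda A_j)^{\dagger}(\lambda A_j)=|\lambda|^2(1+|\delta|^2)I_2$, so every $\lambda A_j$, and hence $\lambda X_1$, is unitary precisely when $|\lambda|^2=\tfrac{1}{1+|\delta|^2}$, which is the stated condition. (Running the same column-norm and orthogonality bookkeeping in reverse also shows the listed conditions are necessary, though only sufficiency is claimed.) I expect the block $A_1$ to be the only nontrivial spot: its entries mix $\alpha$, $\beta^2$, and $\delta^2$, so one has to track conjugates and signs carefully, but after replacing $\overline{\beta}$ by $-\delta^2/\beta$ the off-diagonal term reduces to $-\alpha\beta^2/\delta+\alpha\beta^2/\delta=0$ and each diagonal term to $|\delta|^2+1$, exactly as for the simpler blocks — so the whole argument is routine once the block decomposition is in place.
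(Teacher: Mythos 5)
Your proof is correct and is essentially the paper's argument: a direct verification that the Gram matrix of $\lambda X_1$ equals the identity under the stated hypotheses, using the same key identities $\delta^2/\beta=-\overline{\beta}$, $\overline{\alpha}=1/\alpha$, and $|\delta|^2=|\beta|^2$ to kill the off-diagonal entries and to evaluate every row/column norm as $1+|\delta|^2$. The only difference is bookkeeping — you conjugate by a permutation to reduce to four $2\times 2$ blocks, whereas the paper writes out the full $8\times 8$ product $\lambda X_1(\lambda X_1)^{\dagger}$ — but the quantities checked are identical.
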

\begin{proof}
\begin{align*}
\lambda X_1 (\lambda X_1)^{\dagger}=& \lambda \bar{\lambda} \begin{pmatrix}
\frac{a  \beta^2 \bar{  \alpha  } \bar{ \beta}^2}{\delta^2 \bar{\delta}^2}+\delta \bar{\delta} & 0 & 0 & 0 & 0 & 0 & 0 &
   \frac{  \alpha    \beta^2 \bar{\delta}}{\delta^2}+\frac{\delta \bar{\delta}^2}{\bar{  \alpha  } \bar{ \beta}^2} \\
 0 &  \beta \bar{ \beta}+1 & 0 & 0 & 0 & 0 & \frac{\bar{\delta}^2}{\bar{ \beta}}+ \beta & 0 \\
 0 & 0 &   \alpha   \bar{  \alpha  }+\delta \bar{\delta} & 0 & 0 & \frac{\delta}{\bar{  \alpha  }}+  \alpha   \bar{\delta} & 0 & 0 \\
 0 & 0 & 0 & \frac{\delta^2 \bar{\delta}^2}{ \beta \bar{ \beta}}+1 & \bar{ \beta}+\frac{\delta^2}{ \beta} & 0 & 0 & 0 \\
 0 & 0 & 0 & \frac{\bar{\delta}^2}{\bar{ \beta}}+ \beta &  \beta \bar{ \beta}+1 & 0 & 0 & 0 \\
 0 & 0 & \delta \bar{  \alpha  }+\frac{\bar{\delta}}{a} & 0 & 0 & \frac{1}{  \alpha   \bar{  \alpha  }}+\delta \bar{\delta} & 0 & 0 \\
 0 & \bar{ \beta}+\frac{\delta^2}{ \beta} & 0 & 0 & 0 & 0 & \frac{\delta^2 \bar{\delta}^2}{ \beta \bar{ \beta}}+1 & 0 \\
 \frac{\delta^2 \bar{\delta}}{  \alpha    \beta^2}+\frac{\delta \bar{  \alpha  } \bar{ \beta}^2}{\bar{\delta}^2} & 0 & 0 & 0 & 0 & 0 & 0
   & \frac{\delta^2 \bar{\delta}^2}{  \alpha    \beta^2 \bar{  \alpha  } \bar{ \beta}^2}+\delta \bar{\delta} \\
\end{pmatrix}
\end{align*}

Since it is required that $\bar{ \beta}+\frac{\delta^2}{ \beta}=0$ and $\frac{\delta}{\bar{ \alpha }} +  \alpha \bar{\delta}=0$ we must have $\delta^2=- \beta\bar{ \beta} = -| \beta|^2$ and $\alpha \bar{ \alpha }=\frac{-\delta}{\bar{\delta}}$. Therefore $\operatorname{Re}(\delta)=0$ and $\alpha \bar{ \alpha }=1$ and all off-diagonal elements will be zero when this is the case:
\begin{align*}
\frac{\delta^2 \bar{\delta}}{ \alpha   \beta^2}+\frac{\delta \bar{ \alpha } \bar{ \beta}^2}{\bar{\delta}^2} &=
\delta^2 \bar{\delta}+\frac{\delta  \alpha  \bar{ \alpha } ( \beta\bar{ \beta})^2}{\bar{\delta}^2} =
\delta^2 \bar{\delta}+\frac{\delta  \alpha  \bar{ \alpha } (-\delta^2)^2}{\bar{\delta}^2}\\ &=
\delta^2 \bar{\delta}^3+  \alpha  \bar{ \alpha } \delta^5 =
\bar{\delta}^3+  \alpha  \bar{ \alpha } \delta^3 =
\bar{\delta}^3+ \frac{-\delta}{\bar{\delta}} \delta^3 =
(-\delta)^4 -\delta^4 =0
\end{align*}
A similar computation shows that all diagonal elements are equal to $1+|\delta|^2$ and therefore $\lambda X_1$ will be unitary whenever $|\lambda|^2 = \frac{1}{1+|\delta|^2}$.
\end{proof}

\begin{proposition}
$\lambda X_2$ is unitary when $| \alpha |=1$, $| \beta|=1$, and $|\lambda|^2 = \frac{1}{2}$
\end{proposition}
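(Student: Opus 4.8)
The plan is to reduce the unitarity of $\lambda X_2$ to four independent $2\times2$ computations and then read off the scale. Because $X_2$ has the $X$-shape, it preserves each of the coordinate planes $\operatorname{span}(e_i,e_{9-i})$ for $i=1,2,3,4$; relative to the ordered bases $(e_1,e_8)$, $(e_2,e_7)$, $(e_3,e_6)$, $(e_4,e_5)$ it is block diagonal with blocks
\[
A_1=\begin{pmatrix} -i & i\alpha\beta^{2}\\ \tfrac{1}{\alpha\beta^{2}} & 1\end{pmatrix},\quad
A_2=\begin{pmatrix} 1 & \beta\\ \tfrac{i}{\beta} & -i\end{pmatrix},\quad
A_3=\begin{pmatrix} 1 & \alpha\\ \tfrac{i}{\alpha} & -i\end{pmatrix},\quad
A_4=\begin{pmatrix} -i & \tfrac{i}{\beta}\\ \beta & 1\end{pmatrix}.
\]
Then $X_2X_2^{\dagger}$ is block diagonal with blocks $A_kA_k^{\dagger}$, so it suffices to show $A_kA_k^{\dagger}=2I_2$ for each $k$ under the hypotheses $|\alpha|=|\beta|=1$; unitarity of $\lambda X_2$ then follows with $|\lambda|^{2}=\tfrac12$.

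First I would compute $A_kA_k^{\dagger}$ for $k=1,2,3,4$. In each case the two diagonal entries come out as $1+|z|^{2}$ and $1+|z|^{-2}$ for the relevant monomial $z\in\{\alpha\beta^{2},\beta,\alpha,\beta\}$, while the two off-diagonal entries equal, up to sign and complex conjugation, $i\bigl(z-\bar{z}^{-1}\bigr)$. Imposing $|\alpha|=|\beta|=1$ forces $|z|=1$ for each of these monomials — in particular $|\alpha\beta^{2}|=|\alpha||\beta|^{2}=1$ — hence $z=\bar{z}^{-1}$, which annihilates every off-diagonal entry and turns every diagonal entry into $1+1=2$. Therefore $A_kA_k^{\dagger}=2I_2$, so $X_2X_2^{\dagger}=2I_8$ and $\lambda X_2(\lambda X_2)^{\dagger}=2|\lambda|^{2}I_8=I_8$ precisely when $|\lambda|^{2}=\tfrac12$.

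There is no genuine obstacle beyond the bookkeeping, but the one spot deserving care is the $(e_1,e_8)$ block $A_1$, where the governing monomial is $\alpha\beta^{2}$ rather than $\alpha$ or $\beta$ alone: the cancellation $-i(\bar{\alpha}\bar{\beta}^{2})^{-1}+i\alpha\beta^{2}=0$ uses only $|\alpha|^{2}|\beta|^{4}=1$, which is implied by, but formally weaker than, the stated conditions. It is worth noting that since $\alpha,\beta\neq0$ these modulus conditions are in fact necessary as well: any nonvanishing off-diagonal block entry obstructs unitarity, so $|\alpha|=|\beta|=1$ is forced, and then $|\lambda|^{2}=\tfrac12$ is the unique admissible normalization.
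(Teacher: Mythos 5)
Your proof is correct and follows essentially the same route as the paper's: both compute $X_2X_2^{\dagger}$, observe that the off-diagonal entries vanish precisely when $|\alpha|=|\beta|=1$, and that the diagonal entries then all equal $2$, forcing $|\lambda|^{2}=\tfrac12$. Your reorganization of the $8\times 8$ product into four $2\times 2$ blocks on the invariant planes $\operatorname{span}(e_i,e_{9-i})$ is a tidier bookkeeping of the identical computation, and your closing remark that the conditions are also necessary matches the paper's observation that the off-diagonal equations \emph{only} admit $|\alpha|=|\beta|=1$.
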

\begin{proof}
\begin{align*}
\lambda X_2 (\lambda X_2)^{\dagger}=& \lambda \bar{\lambda} \begin{pmatrix}
  \alpha   \beta^2 \bar{ \alpha } \bar{ \beta}^2+1 & 0 & 0 & 0 & 0 & 0 & 0 & i  \alpha   \beta^2-\frac{i}{\bar{ \alpha } \bar{ \beta}^2}
   \\
 0 &  \beta \bar{ \beta}+1 & 0 & 0 & 0 & 0 & i  \beta-\frac{i}{\bar{ \beta}} & 0 \\
 0 & 0 &  \alpha  \bar{ \alpha }+1 & 0 & 0 & i \alpha-\frac{i}{\bar{ \alpha }} & 0 & 0 \\
 0 & 0 & 0 & \frac{1}{ \beta \bar{ \beta}}+1 & \frac{i}{ \beta}-i \bar{ \beta} & 0 & 0 & 0 \\
 0 & 0 & 0 & i  \beta-\frac{i}{\bar{ \beta}} &  \beta \bar{ \beta}+1 & 0 & 0 & 0 \\
 0 & 0 & \frac{i}{ \alpha }-i \bar{ \alpha } & 0 & 0 & \frac{1}{\alpha \bar{\alpha}}+1 & 0 & 0 \\
 0 & \frac{i}{ \beta}-i \bar{ \beta} & 0 & 0 & 0 & 0 & \frac{1}{ \beta \bar{ \beta}}+1 & 0 \\
 \frac{i}{ \alpha   \beta^2}-i \bar{\alpha} \bar{ \beta}^2 & 0 & 0 & 0 & 0 & 0 & 0 & \frac{1}{ \alpha   \beta^2 \bar{ \alpha }
   \bar{ \beta}^2}+1 \\
\end{pmatrix}
\end{align*}
To make the off-diagonal elements zero we need $\frac{i}{ \beta}-i\bar{ \beta}=0$ and $\frac{i}{ \alpha }-i\bar{ \alpha }=0$ which only has the solutions $|\alpha|=| \beta|=1$. The diagonal elements are then all equal to 2 so $\lambda X_2$ will be unitary as long as $|\lambda|^2=\frac{1}{2}$.
\end{proof}

\begin{proposition}
$\lambda X_3$ is unitary when $| \alpha |=1$, $| \beta|=1$, and $|\lambda|^2 = \frac{1}{2}$
\end{proposition}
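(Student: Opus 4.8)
The plan is to carry out the same computation as for $X_1$ and $X_2$: form the product $X_3 X_3^{\dagger}$ and identify exactly when it is a scalar multiple of the identity. The essential structural point is that an $X$-shaped matrix leaves each of the four coordinate pairs $\{1,8\}$, $\{2,7\}$, $\{3,6\}$, $\{4,5\}$ invariant, so $X_3 X_3^{\dagger}$ is again $X$-shaped and splits into four independent $2\times 2$ blocks. Reading the relevant entries off $X_3$, each block has the shape $M=\begin{pmatrix} 1 & x\\ y & 1\end{pmatrix}$ with $xy=-1$: on $\{1,8\}$ one has $(x,y)=(\alpha\beta^2,\,-\tfrac{1}{\alpha\beta^2})$, on $\{2,7\}$ one has $(-\beta,\,\tfrac{1}{\beta})$, on $\{3,6\}$ one has $(\alpha,\,-\tfrac{1}{\alpha})$, and on $\{4,5\}$ one has $(-\tfrac{1}{\beta},\,\beta)$.

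The first step is the generic $2\times2$ calculation $MM^{\dagger}=\begin{pmatrix} 1+|x|^2 & x+\bar y\\ \bar x+y & 1+|y|^2\end{pmatrix}$, which is diagonal precisely when $y=-\bar x$; combined with $xy=-1$ this forces $|x|=1$, whereupon both diagonal entries equal $2$. The second step is to impose $y=-\bar x$ on each block: $\{2,7\}$ gives $|\beta|=1$, $\{3,6\}$ gives $|\alpha|=1$, $\{4,5\}$ gives $|\beta|=1$, and $\{1,8\}$ gives $|\alpha|\,|\beta|^2=1$, which is already implied by the first three. Hence under $|\alpha|=|\beta|=1$ every block of $X_3 X_3^{\dagger}$ equals $2I_2$, so $X_3 X_3^{\dagger}=2I_8$ and $\lambda X_3(\lambda X_3)^{\dagger}=2|\lambda|^2 I_8$, which is the identity exactly when $|\lambda|^2=\tfrac12$.

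I do not anticipate a real obstacle: the argument is a routine verification that mirrors the proofs of the two preceding propositions. The only mild subtleties are to confirm that the $X$-shape genuinely decouples the four blocks so that no cross terms appear, and to notice that the constraint produced by the $\{1,8\}$ block is redundant given the others and therefore does not strengthen the stated hypotheses. One could also short-circuit part of the work by invoking the symmetries of Proposition~\ref{symmetries}, but the direct block computation is already short enough to record in full.
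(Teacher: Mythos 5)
Your proof is correct and follows essentially the same route as the paper: both compute $X_3X_3^{\dagger}$, force the off-diagonal entries to vanish (yielding $|\alpha|=|\beta|=1$), and observe that the diagonal entries then all equal $2$. Your organization of the $8\times 8$ product into four independent $2\times 2$ blocks of the form $\begin{pmatrix}1 & x\\ y & 1\end{pmatrix}$ with $xy=-1$ is a tidier presentation of the same computation, and correctly exposes the redundancy of the $\{1,8\}$ constraint, but it is not a different argument.
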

\begin{proof}
\begin{align*}
\lambda X_3 (\lambda X_3)^{\dagger}=& \lambda \bar{\lambda} \begin{pmatrix}
   \alpha   \beta^2 \bar{ \alpha } \bar{ \beta}^2+1 & 0 & 0 & 0 & 0 & 0 & 0 &  \alpha   \beta^2-\frac{1}{\bar{ \alpha } \bar{ \beta}^2} \\
 0 &  \beta \bar{ \beta}+1 & 0 & 0 & 0 & 0 & \frac{1}{\bar{ \beta}}- \beta & 0 \\
 0 & 0 &  \alpha  \bar{ \alpha }+1 & 0 & 0 & a-\frac{1}{\bar{a}} & 0 & 0 \\
 0 & 0 & 0 & \frac{1}{ \beta \bar{ \beta}}+1 & \bar{ \beta}-\frac{1}{ \beta} & 0 & 0 & 0 \\
 0 & 0 & 0 &  \beta-\frac{1}{\bar{ \beta}} &  \beta \bar{ \beta}+1 & 0 & 0 & 0 \\
 0 & 0 & \bar{ \alpha }-\frac{1}{ \alpha } & 0 & 0 & \frac{1}{ \alpha  \bar{ \alpha }}+1 & 0 & 0 \\
 0 & \frac{1}{ \beta}-\bar{ \beta} & 0 & 0 & 0 & 0 & \frac{1}{ \beta \bar{ \beta}}+1 & 0 \\
 \bar{ \alpha } \bar{ \beta}^2-\frac{1}{ \alpha   \beta^2} & 0 & 0 & 0 & 0 & 0 & 0 & \frac{1}{a  \beta^2 \bar{ \alpha }
   \bar{ \beta}^2}+1 \\
\end{pmatrix}
\end{align*}
To make the off-diagonal elements zero we need $\frac{1}{ \beta}-\bar{ \beta}=0$ and $\frac{1}{ \alpha }-\bar{ \alpha }=0$ which only has the solutions $| \alpha |=| \beta|=1$. The diagonal elements are then all equal to 2 so $\lambda X_3$ will be unitary as long as $|\lambda|^2=\frac{1}{2}$.
\end{proof}

\begin{proposition}
$\lambda X_4$ is unitary when $| \alpha |^2=\frac{\delta-2}{\bar{\delta}}$, $| \beta|^2=\bar{\delta}^2-2\bar{\delta}+2$, $|\lambda|^2 = \frac{1}{1+| \beta|^2}$, and $\delta$ is one of the following: $1+i$, $1-i$, $1$, $\frac{5}{4}+\frac{\sqrt{7}}{4}$, $\frac{5}{4}-\frac{\sqrt{7}}{4}$.
\end{proposition}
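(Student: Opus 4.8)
The plan is to run the same computation as for the three preceding propositions: form $\lambda X_4(\lambda X_4)^{\dagger}=|\lambda|^{2}\,X_4X_4^{\dagger}$ and force it to equal $\frac{1}{|\lambda|^{2}}I_8$. The feature that makes this manageable is that $X_4$ is X-shaped, so if we regroup the standard basis of $\mathbb{C}^{8}$ into the four pairs $\{e_1,e_8\}$, $\{e_2,e_7\}$, $\{e_3,e_6\}$, $\{e_4,e_5\}$, the matrix $X_4$ --- and therefore $X_4X_4^{\dagger}$ --- becomes block diagonal with four $2\times 2$ blocks. Hence $\lambda X_4$ is unitary if and only if each of these four blocks equals $\frac{1}{|\lambda|^{2}}I_2$: every block's off-diagonal entry must vanish, and all four blocks must share the common diagonal value $1/|\lambda|^{2}$.

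First I would write out and solve the off-diagonal equations. Abbreviating $\mu:=\delta^{2}-2\delta+2$, the $\{e_3,e_6\}$ block gives $(2-\delta)/\bar\alpha+\alpha\bar\delta=0$, i.e. $|\alpha|^{2}=(\delta-2)/\bar\delta$; the $\{e_2,e_7\}$ and $\{e_4,e_5\}$ blocks each reduce (after clearing $\beta$) to $|\beta|^{2}+\mu=0$, which at once forces $\mu$ --- equivalently $\delta^{2}-2\delta$ --- to be real and fixes $|\beta|^{2}$ as a function of $\delta$; and the $\{e_1,e_8\}$ block, once these two relations and the consequence $|\beta|^{4}=|\mu|^{2}$ are used, collapses to an identity and so yields nothing new. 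At the end of this step $|\alpha|^{2}$ and $|\beta|^{2}$ are determined by $\delta$, and $\delta$ is already constrained by $\operatorname{Im}(\delta^{2}-2\delta)=0$, i.e. either $\delta\in\mathbb{R}$ or $\operatorname{Re}(\delta)=1$.

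Next I would impose that all eight diagonal entries agree. Within each of the $\{e_2,e_7\}$ and $\{e_4,e_5\}$ blocks the two diagonal entries are automatically equal once the off-diagonal equation holds, with common value $1+|\beta|^{2}$; and $|\alpha\beta^{2}/\mu|^{2}=|\alpha|^{2}$ together with $|\mu/(\alpha\beta^{2})|^{2}=1/|\alpha|^{2}$ make the diagonal pair of the $\{e_1,e_8\}$ block coincide with that of the $\{e_3,e_6\}$ block. So the only genuinely new requirements are the two equations $|2-\delta|^{2}+|\alpha|^{2}=1+|\beta|^{2}$ and $1/|\alpha|^{2}+|\delta|^{2}=1+|\beta|^{2}$. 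Substituting the formulas for $|\alpha|^{2}$ and $|\beta|^{2}$ and clearing denominators converts these into a polynomial system in $\delta$ alone (the reality constraint having eliminated $\bar\delta$), and the compatibility of the two equations is exactly what reduces the solution set to a finite list; given the stated roots $1+i,\,1-i,\,1,\,\frac{5+\sqrt7}{4},\,\frac{5-\sqrt7}{4}$, I would expect the governing polynomial to be (a scalar multiple of) $(\delta^{2}-2\delta+2)(\delta-1)(8\delta^{2}-20\delta+9)$, the factor $\delta^{2}-2\delta+2=\mu$ being spurious --- it is picked up when clearing the $\mu$ in the denominators. Finally, for each admissible $\delta$ the common diagonal value is $1+|\beta|^{2}$, so $\lambda X_4$ is unitary precisely when $|\lambda|^{2}=1/(1+|\beta|^{2})$.

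The step I expect to be the main obstacle is the reduction in the last paragraph: after substituting the modulus conditions one must clear denominators with some care --- keeping track of which powers of $\mu$, $\bar\delta$, $\bar\alpha$ are introduced, since this is exactly how the degenerate values $\delta=1\pm i$ (where $\mu=0$ makes the entries of $X_4$ ill-defined) enter the list --- and then factor the resulting quintic and discard the spurious root. A related point worth checking, not required for the statement as phrased but bearing on whether it is ever non-vacuous, is whether $(\delta-2)/\bar\delta$ and the computed value of $|\beta|^{2}$ come out as positive reals at each of the five listed values of $\delta$.
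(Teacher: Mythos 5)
Your overall strategy matches the paper's up to the point where the finite list of $\delta$-values must appear, and there the two arguments part ways: the paper extracts that list from the corner entry of $X_4X_4^{\dagger}$ (your $\{e_1,e_8\}$ block) after substituting the other off-diagonal conditions, whereas you declare the corner automatic and plan to recover the list from equality of the diagonal entries. The difficulty is that your (correct) resolution of the $\{e_2,e_7\}$ and $\{e_4,e_5\}$ equations, namely $|\beta|^{2}=-\mu$ with $\mu=\delta^{2}-2\delta+2$ real, makes that decisive step collapse. The corner is indeed then an identity, exactly as you predict: multiplying the $(1,8)$ entry by $\bar\alpha\bar\beta^{2}\mu$ gives $|\alpha|^{2}|\beta|^{4}\bar\delta-(\delta-2)\mu^{2}=\frac{\delta-2}{\bar\delta}\,\mu^{2}\,\bar\delta-(\delta-2)\mu^{2}=0$. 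But the diagonal conditions then do not produce the quintic you anticipate. On the branch $\operatorname{Re}(\delta)=1$, writing $\delta=1+iy$ one finds $|\alpha|^{2}=\frac{\delta-2}{\bar\delta}=-1$ and all three distinct diagonal values equal $y^{2}$, so the equalities are vacuous; on the real branch the equation $|2-\delta|^{2}+|\alpha|^{2}=1+|\beta|^{2}$ reduces to $(\delta-1)^{3}=0$. Neither branch yields the factor $8\delta^{2}-20\delta+9$, so your outline, carried out, does not terminate at the stated five values.

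The source of the mismatch is a sign: the paper substitutes $|\beta|^{2}=+\bar f$ (with $f=\mu$), which does not actually annihilate the off-diagonal entry $\frac{\bar f}{\bar\beta}+\beta$ --- that requires $|\beta|^{2}=-\bar f$, your sign --- and it is only via that substitution, together with some further algebraic slips in the corner entry, that the list $1\pm i,\,1,\,\frac{5\pm\sqrt7}{4}$ appears. Following the correct sign through, $|\beta|^{2}>0$ forces $\delta=1+iy$ with $y^{2}>1$, which in turn forces $|\alpha|^{2}=-1<0$: the unitarity constraints admit no X-shaped solution at all. So the positivity check you relegate to a closing remark (``whether it is ever non-vacuous'') is not a side issue; it is where the argument actually terminates, and no correct completion of your plan --- nor of the paper's --- recovers the proposition as stated.
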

\begin{proof}
Let $f=\delta^2-2\delta+2$ we then have:
\begin{align*}
&\lambda X_4 (\lambda X_4)^{\dagger}=\\
&\lambda \bar{\lambda} 
\begingroup
\setlength\arraycolsep{.2pt}
\begin{pmatrix}
 \frac{ \alpha   \beta^2 \bar{ \alpha } \bar{ \beta}^2}{f \bar{f}}+(\delta-2) \left(\bar{\delta}-2\right) & 0 & 0 & 0 & 0 & 0 & 0 & \frac{ \alpha   \beta^2
   \bar{\delta}}{f}-\frac{(\delta-2) \bar{f}}{\bar{ \alpha } \bar{ \beta}^2}
   \\
 0 &  \beta \bar{ \beta}+1 & 0 & 0 & 0 & 0 & \frac{\bar{f}}{\bar{ \beta}}+ \beta & 0 \\
 0 & 0 &  \alpha  \bar{ \alpha }+(\delta-2) \left(\bar{\delta}-2\right) & 0 & 0 & \frac{2-\delta}{\bar{ \alpha }}+ \alpha  \bar{\delta} &
   0 & 0 \\
 0 & 0 & 0 & \frac{f \bar{f}}{ \beta
   \bar{ \beta}}+1 & \bar{ \beta}+\frac{f}{ \beta} & 0 & 0 & 0 \\
 0 & 0 & 0 & \frac{\bar{f}}{\bar{ \beta}}+ \beta &  \beta \bar{ \beta}+1 & 0 & 0 & 0 \\
 0 & 0 & \frac{ \alpha  \delta \bar{a}-\bar{\delta}+2}{a} & 0 & 0 & \frac{1}{ \alpha  \bar{ \alpha }}+\delta \bar{\delta} & 0 & 0
   \\
 0 & \bar{ \beta}+\frac{f}{ \beta} & 0 & 0 & 0 & 0 & \frac{f \bar{f}}{ \beta \bar{ \beta}}+1 & 0 \\
 \frac{\delta \bar{ \alpha } \bar{ \beta}^2}{\bar{f}}-\frac{f
   \bar{f}}{ \alpha   \beta^2} & 0 & 0 & 0 & 0 & 0 & 0 & \frac{f
   \bar{f}}{a  \beta^2 \bar{ \alpha } \bar{ \beta}^2}+\delta \bar{\delta} \\
\end{pmatrix}
\endgroup
\end{align*}
Because we need $\frac{2-\delta}{\bar{ \alpha }} +  \alpha \bar{\delta} =0$ we can set $ \alpha  \bar{ \alpha }=| \alpha |^2=\frac{\delta-2}{\bar{\delta}}$. We also need $\frac{\bar{f}}{\bar{ \beta}}+ \beta=0$ which will happen if $ \beta \bar{ \beta}=| \beta|^2=\bar{f} = \bar{\delta}^2 -2\bar{\delta}+2$. Clearing the denominator and substituting into the upper-rightmost element gives us:

\begin{align*}
 \alpha  \bar{ \alpha } ( \beta \bar{ \beta})^2 \bar{\delta} - (\delta-2)f\bar{f} = 
\frac{(\delta-2)(\bar{\delta}^2-2\bar{\delta}+2)}{\bar{\delta}} - (\delta-2)(\bar{\delta}^2-2\bar{\delta}+2)(\delta^2-2\delta+2)
\end{align*}
Clearing the denominator again this simplifies to:

\begin{align*}
&(\delta-2)(\bar{\delta}^2-2\bar{\delta}+2) - \bar{\delta}(\delta-2)(\bar{\delta}^2-2\bar{\delta}+2)(\delta^2-2\delta+2) = (\delta-2)(\bar{\delta}^2-2\bar{\delta}+2)(1-\bar{\delta}(\delta^2-2\delta+2))
\end{align*}
We can eliminate the case $\delta=2$ since that breaks the $X$-shape. There are then two possibilities, either $\bar{\delta}^2-2\bar{\delta}+2=0$, or $\bar{\delta}(\delta^2-2\delta+2)=1$. In the first case the only two solutions are $\delta=1+i$ or $\delta=1-i$. Solving the second case we substitute $\delta=x+iy$ where $x$ and $y$ are real:
\begin{align*}
\bar{\delta}(\delta^2-2\delta+2)&=(x-iy)((x+iy)^2-2(x+iy)+2)=1\\
\end{align*}
This can then be split into real and imaginary parts:
\begin{align*}
x^3-2x^2+xy^2+2x-2y^2-1&=0\\
y(x^2+y^2-2)=0
\end{align*}
This has the following solutions: $y=0$ and $x=1$ or $x^2+y^2=2$.
In the case $x^2+y^2=2$ we need to solve:
\begin{align*}
x^3-2(x^2+y^2)+xy^2+2x-1&=x^3-2(2)+xy^2+2x-1=0\\
x^2+y^2-2&=0\\
\end{align*}
We can set $y=\pm \sqrt{\frac{5-x^3-2x}{x}}$ as long as $x\neq 0$. In the case that $x=0$ we must have $y=\pm \sqrt{2}$ however this does not solve the first equation. Substituting $y=\pm \sqrt{\frac{5-x^3-2x}{x}}$ into the second equation we get:
\begin{align*}
x^2+ \frac{5-x^3-2x}{x}&=2\\
\end{align*}
Multiplying through by $x$ and then subtracting $2x$ from both sides gives us:
\begin{align*}
x^3+5-x^3-2x-2x&=5-4x=0 \\
\end{align*}
Therefore $x=\frac{5}{4}$ and $y=\pm \frac{\sqrt{7}}{4}$.
\end{proof}

\bibliographystyle{plain}

\appendix

\section{Proof of symmetries}

\begin{proof}
Let $R$ be a solution to the $(d,m,l)$-gYBE and let $\lambda$ be a nonzero scalar, and $Q$ a non-singular $d\times d$ matrix.
\begin{enumerate}
\item \begin{align*}
    (\lambda R \otimes I_V^{\otimes l})(I_V^{\otimes l} \otimes \lambda R)(\lambda R\otimes I_V^{\otimes l})&=(I_V^{\otimes l} \otimes \lambda R)(\lambda R \otimes I_V^{\otimes l})(I_V^{\otimes l} \otimes \lambda R)\\
   \lambda^3 (R \otimes I_V^{\otimes l})(I_V^{\otimes l} \otimes R)(R\otimes I_V^{\otimes l})&=\lambda^3(I_V^{\otimes l} \otimes  R)( R \otimes I_V^{\otimes l})(I_V^{\otimes l} \otimes R)\\
   (R \otimes I_V^{\otimes l})(I_V^{\otimes l} \otimes R)(R\otimes I_V^{\otimes l})&=(I_V^{\otimes l} \otimes  R)( R \otimes I_V^{\otimes l})(I_V^{\otimes l} \otimes R)\\
\end{align*}
\item \begin{align*}
    (R^{-1} \otimes I_V^{\otimes l})(I_V^{\otimes l} \otimes R^{-1})(R^{-1} \otimes I_V^{\otimes l})&=(I_V^{\otimes l} \otimes R^{-1})(R^{-1} \otimes I_V^{\otimes l})(I_V^{\otimes l} \otimes R^{-1}) \\
    (R \otimes I_V^{\otimes l})^{-1}(I_V^{\otimes l} \otimes R)^{-1}(R\otimes I_V^{\otimes l})^{-1}&=(I_V^{\otimes l} \otimes  R)^{-1}( R \otimes I_V^{\otimes l})^{-1}(I_V^{\otimes l} \otimes R)^{-1} \\
   ( (R \otimes I_V^{\otimes l})(I_V^{\otimes l} \otimes R)(R\otimes I_V^{\otimes l}))^{-1} &=( (I_V^{\otimes l} \otimes  R)( R \otimes I_V^{\otimes l})(I_V^{\otimes l} \otimes R))^{-1}\\
   (R \otimes I_V^{\otimes l})(I_V^{\otimes l} \otimes R)(R\otimes I_V^{\otimes l})&=(I_V^{\otimes l} \otimes  R)( R \otimes I_V^{\otimes l})(I_V^{\otimes l} \otimes R)
\end{align*}

\item The same as 2 with $R^{-1}$ replaced by $R^*$
\item The same as 2 with $R^{-1}$ replaced by $R^T$
\item To limit the need for parenthesis denote $\bm{Q} = Q^{-1}$ and consider the left hand side of the gYBE: \begin{align}
\label{QYBE}
       &(Q^{\otimes m}R\bm{Q}^{\otimes m}\otimes I_V^{\otimes l})(I_V^{\otimes l} \otimes Q^{\otimes m}R\bm{Q}^{\otimes m})(Q^{\otimes m}R\bm{Q}^{\otimes m}\otimes I_V^{\otimes l})\\
       =&(Q^{\otimes m} R\otimes I_V^{\otimes l})(\bm{Q}^{\otimes m}\otimes I_V^{\otimes l})(I_V^{\otimes l} \otimes Q^{\otimes m})(I_V^{\otimes l} \otimes R\bm{Q}^{\otimes m})(Q^{\otimes m}R\bm{Q}^{\otimes m}\otimes I_V^{\otimes l})\\
\end{align}
To simplify further we look at the term $(\bm{Q}^{\otimes m}\otimes I_V^{\otimes l})(I_V^{\otimes l} \otimes Q^{\otimes m})$ there are then two cases to consider. In the \textbf{first case} $m>l$ and in the \textbf{second case $m\leq l$}. In the \textbf{first case} we can write:

\begin{align*}
(\bm{Q}^{\otimes m}\otimes I_V^{\otimes l})(I_V^{\otimes l} \otimes Q^{\otimes m})&=(\bm{Q}^{\otimes l}\otimes I_V^{\otimes m-l} \otimes Q^{\otimes l})
\end{align*}

And in the \textbf{second case} we can write:
\begin{align*}
(\bm{Q}^{\otimes m}\otimes I_V^{\otimes l})(I_V^{\otimes l} \otimes Q^{\otimes m})&=(\bm{Q}^{\otimes m}\otimes I_V^{\otimes l-m} \otimes Q^{\otimes m})
\end{align*}

Substituting the \textbf{first case} into equation \ref{QYBE} above we get:
\begin{align*}
&(Q^{\otimes m} R\otimes I_V^{\otimes l})(\bm{Q}^{\otimes l}\otimes I_V^{\otimes m-l} \otimes Q^{\otimes l})(I_V^{\otimes l} \otimes R\bm{Q}^{\otimes m})(Q^{\otimes m}R\bm{Q}^{\otimes m}\otimes I_V^{\otimes l})\\
=&(Q^{\otimes m} R\otimes I_V^{\otimes l})(\bm{Q}^{\otimes l}\otimes (I_V^{\otimes m-l} \otimes Q^{\otimes l}) R\bm{Q}^{\otimes m})(Q^{\otimes m} \otimes I_V^{\otimes l})(R\bm{Q}^{\otimes m}\otimes I_V^{\otimes l})\\
=&(Q^{\otimes m} R\otimes I_V^{\otimes l})
(I_V^{\otimes l}\otimes (I_V^{\otimes m-l} \otimes Q^{\otimes l}) R\bm{Q}^{\otimes m}(Q^{\otimes m-l} \otimes I_V^{\otimes l}))
(R\bm{Q}^{\otimes m}\otimes I_V^{\otimes l})\\
=&(Q^{\otimes m} R\otimes I_V^{\otimes l})
(I_V^{\otimes l}\otimes (I_V^{\otimes m-l} \otimes Q^{\otimes l}) R(I_V^{\otimes m-l} \otimes \bm{Q}^{\otimes l}))
(R\bm{Q}^{\otimes m}\otimes I_V^{\otimes l})\\
=&(Q^{\otimes m} R\otimes I_V^{\otimes l})(I_V^{\otimes l}\otimes I_V^{\otimes m-l}\otimes Q^{\otimes l})(I_V^{\otimes l}\otimes R)(I_V^{\otimes l} \otimes I_V^{\otimes m-l} \otimes \bm{Q}^{\otimes l})
(R\bm{Q}^{\otimes m}\otimes I_V^{\otimes l})\\
=&(Q^{\otimes m} R\otimes I_V^{\otimes l})(I_V^{\otimes m}\otimes Q^{\otimes l})(I_V^{\otimes l}\otimes R)(I_V^{\otimes m} \otimes \bm{Q}^{\otimes l})
(R\bm{Q}^{\otimes m}\otimes I_V^{\otimes l})\\
=&(Q^{\otimes m} R\otimes Q^{\otimes l})(I_V^{\otimes l}\otimes R)
(R\bm{Q}^{\otimes m}\otimes \bm{Q}^{\otimes l})\\
=&(Q^{\otimes m} \otimes  Q^{\otimes l})(R \otimes I_V^{\otimes l})(I_V^{\otimes l}\otimes R)
(R \otimes I_V^{\otimes l})(\bm{Q}^{\otimes m}\otimes \bm{Q}^{\otimes l})\\
=&Q^{\otimes m+l}(R \otimes I_V^{\otimes l})(I_V^{\otimes l}\otimes R)
(R \otimes I_V^{\otimes l})\bm{Q}^{\otimes m+l}\\
\end{align*}

By a similar argument the right hand side of the gYBE can be simplified to:
\begin{align*}
& Q^{\otimes m+l} (I_V^{\otimes l} \otimes R)(R \otimes I_V^{\otimes l})(I_V^{\otimes l} \otimes R) \bm{Q}^{\otimes m+l}\\
\end{align*}

Conjugating by $Q^{\otimes m}$ we get the gYBE. We can similarly handle the \textbf{second case} when $m\leq l$ by substituting into equation \ref{QYBE}:
\begin{align*}
&(Q^{\otimes m} R\otimes I_V^{\otimes l})(\bm{Q}^{\otimes m}\otimes I_V^{\otimes l-m} \otimes Q^{\otimes m})(I_V^{\otimes l} \otimes R\bm{Q}^{\otimes m})(Q^{\otimes m}R\bm{Q}^{\otimes m}\otimes I_V^{\otimes l})\\
=&(Q^{\otimes m} R\otimes I_V^{\otimes l})(\bm{Q}^{\otimes m}\otimes I_V^{\otimes l-m} \otimes Q^{\otimes m} R\bm{Q}^{\otimes m})(Q^{\otimes m} \otimes I_V^{\otimes l})(R\bm{Q}^{\otimes m}\otimes I_V^{\otimes l})\\
=&(Q^{\otimes m} R\otimes I_V^{\otimes l})(I_V^{\otimes m}\otimes I_V^{\otimes l-m} \otimes Q^{\otimes m} R\bm{Q}^{\otimes m})(R\bm{Q}^{\otimes m}\otimes I_V^{\otimes l})\\
=&(Q^{\otimes m} R\otimes I_V^{\otimes l})
(I_V^{\otimes m}\otimes I_V^{\otimes l-m} \otimes Q^{\otimes m})(I_V^{\otimes m}\otimes I_V^{\otimes l-m} R)
(I_V^{\otimes m}\otimes I_V^{\otimes l-m} \otimes \bm{Q}^{\otimes m})(R\bm{Q}^{\otimes m}\otimes I_V^{\otimes l})\\
=&(Q^{\otimes m} R\otimes I_V^{\otimes l-m} \otimes Q^{\otimes m})
(I_V^{\otimes l}\otimes R) (R\bm{Q}^{\otimes m}\otimes I_V^{\otimes l-m} \otimes \bm{Q}^{\otimes m})\\
=&(Q^{\otimes m} \otimes I_V^{\otimes l-m} \otimes Q^{\otimes m})(R \otimes I_V^{\otimes l})
(I_V^{\otimes l}\otimes R) (R\otimes I_V^{\otimes l}) (\bm{Q}^{\otimes m}\otimes I_V^{\otimes l-m} \otimes \bm{Q}^{\otimes m})\\
\end{align*}
The right hand side of the gYBE can similarly be manipulated to:
\begin{align*}
&(Q^{\otimes m} \otimes I_V^{\otimes l-m} \otimes Q^{\otimes m})(I_V^{\otimes l} \otimes R)(R \otimes I_V^{\otimes l})(I_V^{\otimes l} \otimes R) (\bm{Q}^{\otimes m}\otimes I_V^{\otimes l-m} \otimes \bm{Q}^{\otimes m}) \\
\end{align*}
Therefore if $R$ is a solution to the gYBE so is $Q^{\otimes m} R \bm{Q}^{\otimes m}$.
\end{enumerate}
\end{proof}

\section{Equivalent forms of the aYBE}

Here we show how to obtain the form of the aYBE in equation \ref{eq:aYBE} from equation \ref{AlgebraicYBE}. In equation \ref{AlgebraicYBE} $R_{ab}$ acts on the factors $a$ and $b$ and does not affect the third factor. For example $R_{13}$ acts on the first and third factors and does not affect the middle factor:
\begin{align*} 
R_{13} (e_i \otimes e_j \otimes e_k) = \sum_{ab} R_{ik}^{ab} (e_a \otimes e_j \otimes e_b)
\end{align*}
The left hand side of equation \ref{AlgebraicYBE} acts on $(e_{j_1} \otimes e_{j_2} \otimes e_{j_3})$ as follows:

\begin{align*}
    R_{12}R_{13}R_{23}(e_{j_1} \otimes e_{j_2} \otimes e_{j_3}) &= R_{12}R_{13} \sum_{k_2,k_3} R_{{j_2} j_3}^{k_2k_3} (e_{j_1} \otimes e_{k_2} \otimes e_{k_3})\\
    &= R_{12} \sum_{k_2, k_3, k_1, l_3} R_{{j_2} j_3}^{k_2k_3} R_{j_1 k_3}^{k_1 l_3} (e_{k_1} \otimes e_{k_2} \otimes e_{l_3})\\
    &= \sum_{k_2, k_3, k_1, l_3, l_1, l_2} R_{{j_2} j_3}^{k_2k_3} R_{j_1 k_3}^{k_1 l_3} R_{k_1 k_2}^{l_1 l_2} (e_{l_1} \otimes e_{l_2} \otimes e_{l_3})\\
\end{align*}
The action of the right hand side of equation \ref{AlgebraicYBE} can similarly be written:
\begin{align*}
    R_{23}R_{13}R_{12}(e_{j_1} \otimes e_{j_2} \otimes e_{j_3}) &= R_{23}R_{13} \sum_{k_1, k_2} R_{j_1 j_2}^{k_1 k_2} (e_{k_1} \otimes e_{k_2} \otimes e_{j_3})\\
     &= R_{23} \sum_{k_1,k_2,l_1,k_3} R_{ j_2 j_2}^{k_1 k_2} R_{k_1 j_3}^{l_1 k_3} (e_{l_1} \otimes e_{k_2} \otimes e_{k_3})\\
     &= \sum_{k_1,k_2,l_1,k_3,l_2, l_3} R_{j_1 j_2}^{k_1 k_2} R_{k_1 j_3}^{l_1 k_3} R_{k_2 k_3}^{l_2 l_3} (e_{l_1} \otimes e_{l_2} \otimes e_{l_3})\\
\end{align*}
Therefore equation \ref{AlgebraicYBE} is equivalent to equation \ref{eq:aYBE}.

\bibliography{Univbib}

\end{document}